\def\EMAIL#1{\href{mailto:#1}{#1}}
\newcommand{\opt}{{\mathsf{OPT}}}
\newcommand{\algo}{{\mathsf{ALG}}}
\newcommand{\red}[1]{{#1}}
\newcommand{\R}{\mathbb{R}}
\newcommand{\N}{\mathbb{N}}
\def\p{P}
\def\E{{\mathbb{E}}}
\def\eps{\varepsilon}
\newcommand{\abc}[3]{\mbox{$\textup{#1}\,|\,\textup{$#2$}\,|\,\textup{$#3$}$}}
\newcommand{\bigO}[1]{\mbox{$\textup{O}(\,#1\,)$}}
\def\ie{{i.e.,\ }}
\def\CV{\mathbb{CV}}
\def\Var{{\mathbb{V}\rm ar}}
\def\eg{{e.\,g.}}
\def\BZ{{\mathbb Z}}
\newcommand{\Prb}[1]{\mbox{$\mathbb{P}[#1]$}}
\newcommand{\incr}[1]{\mbox{$\textup{cost}(#1)$}}
\newcommand{\costub}[1]{\mbox{$\textup{cost}(#1)$}}
\begin{document}

\maketitle

The point is, ladies and \red{gentlemen}, that greed, for lack of a better word, is good. Greed is right, greed works (Gordon Gekko in {\it Wall Street} \cite{WallSt})


\section{Introduction.}
Scheduling jobs on multiple, parallel machines is a fundamental problem both in combinatorial optimization and systems theory. There is a vast amount of different model variants as well as applications, which is testified by the existence of the handbook \cite{HandbookScheduling}. A well studied class of problems is scheduling a set of $n$ nonpreemptive jobs that arrive over time on $m$ unrelated machines with the objective of  minimizing the total weighted {completion time}. In the unrelated machines model the matrix that describes the processing times of all jobs on all machines can have any rank larger than 1.  The offline version of the problem is denoted \abc{R}{r_{j}}{\sum w_jC_j} in the three-field notation of Graham et al.~\cite{gllrk1979}, and the problem has been a cornerstone problem for the development of new techniques in the design of (approximation) algorithms, e.g.\ \cite{BSS,horowitz1976exact,LST,Skut-JACM01}.

This paper addresses the online version of the problem where jobs sizes are stochastic. In the online model  jobs arrive over time, and the set of jobs is unknown a priori. For pointers to relevant work on online models in scheduling, refer to \cite{ImMP11,PruhsST}. In many systems, the scheduler may not know the exact processing times of jobs when the jobs arrive to the system. Different approaches have been introduced to cope with this uncertainty.  If jobs can be preempted, then non-clairvoyant schedulers have been studied that do not know the processing time of a job until the job is completed \cite{MotwaniPT94,BecchettiL01,KalyanasundaramP00,GuptaIKMP12,ImKMP14}. Unfortunately, if preemption is not allowed then any algorithm has poor performance in the non-clairvoyant model, as the lower bound for the competitive ratio against the offline optimal schedule is $\Omega(n)$. This is even true if we consider the special case where all jobs have the same unit weight $w_j$.

This lower bound suggests that the non-clairvoyant model is too pessimistic for non-preemptive problems. Even if exact processing times are unknown to the scheduler, it can be realistic to assume that at least an estimate of the true processing times is available.  For such systems, a model that is used is \emph{stochastic scheduling}.  In the stochastic scheduling model the jobs' processing times are given by random variables.  A \emph{non-anticipatory} scheduler only knows the random variable  that encodes the possible realizations of a job's processing time. If the scheduler starts a job on a machine, then that job must be run to completion \emph{non-preemptively}, and it is only when the job completes that the scheduler learns the actual processing time of the job. With respect to the random processing times, both the scheduler and the optimal solution are required to be non-anticipatory, which means that only the (conditional) distribution of a job's processing time may be used at any point in time. Stochastic scheduling has been well-studied, including fundamental work such as by M\"ohring et al.\ \cite{mrw1,mrw2} and approximation algorithms, e.g.\ \cite{MSU99,SkutUetz05,muv2006,SSU2016,schulz2008}.

This paper considers online scheduling of non-preemptive, stochastic jobs in the unrelated machine model to minimize the total weighted completion time. This is the same problem as considered in the paper \cite{muv2006} by Megow et al., but here we address the more general \emph{unrelated} machines model.  In the stochastic unrelated machine model, the scheduler is given machine-dependent probability distributions that describe a job's potential processing time for each of the machines.  For a given job the processing times across different machines need not be independent, but the processing times of different jobs are assumed to be independent.

\noindent
{\bf Identical machines, special processing time distributions.} Restricting attention to non-preemptive policies, when all machines are identical, perhaps the most natural algorithm is Weighted Shortest Expected Processing Time (WSEPT) first. When a machine is free, WSEPT always assigns the job to be processed that has the maximum ratio of weight over expected size. When all jobs have unit weight, this algorithm boils down to the SEPT algorithm that  greedily schedules jobs with the smallest expected size.  When there is a single machine and all jobs arrive at the same time, WSEPT is optimal~\cite{Rothkopf66}. For multiple machines with equal weights for all jobs, if the job sizes are deterministic and arrive at the same time, SEPT is optimal~\cite{Horn73}. For multiple identical machines with equal weights for all jobs, SEPT is optimal if job sizes are exponentially distributed \cite{bruno_downey_fred81,wp80}, or more generally,  are stochastically comparable in pairs~\cite{wvw86}.  Some extensions of these optimality results to the problem with weights exist as well \cite{ka87}. For more general distributions, simple solutions fail \cite{uetz03}, and our knowledge of optimal scheduling policies is  limited.

\noindent
{\bf Identical machines, arbitrary processing times.} To cope with these challenges, approximation algorithms have been studied. With the notable exception of \cite{ImMP15}, all approximation algorithms 
have performance guarantees that depend on an upper bound $\Delta$ on the squared coefficient of variation of the underlying random variables. M{\"{o}}hring, Schulz and Uetz \cite{MSU99} established the first approximation algorithms for stochastic scheduling on identical machines via a linear programming relaxation.  Their work gave a $(3+\Delta)$-approximation  when jobs are released over time (yet known offline), and they additionally showed that WSEPT is a $(3+\Delta)/2$-approximation when jobs arrive together\footnote{The ratio is slightly better, but for simplicity we ignore the additive $\Theta(1/m)$ term. }.  These results have been built on and generalized in several settings \cite{SkutUetz05,mv2006,muv2006,schulz2008,SSU2016,jaegerskutella2018}, notably in \cite{muv2006,schulz2008} for the online setting.
The currently best known result when jobs are released over time (yet known offline) is a $(2+\Delta)$-approximation by Schulz~\cite{schulz2008}.  In the online setting Schulz gives an algorithm  with  performance guarantee of $(2.309 + 1.309\Delta)$~\cite{schulz2008}. These results  build on an idea from Correa and Wagner \cite{cw2008} to use a preemptive, fast single machine relaxation, next to the relaxation of \cite{MSU99}.  The work of Im, Moseley and Pruhs~\cite{ImMP15} gave the first results independent of $\Delta$ showing that there exist poly-logarithmic approximation algorithms under some assumptions.  All these papers address problems with identical machines. 

\noindent
{\bf Unrelated machines, arbitrary processing times.} For some 15 years after the results of M\"ohring et al.\ \cite{MSU99} for the identical machines case, no non-trivial results were known for the \emph{unrelated} machines case despite being a  target in the area. Recently  Skutella, Sviridenko and Uetz~\cite{SSU2016}  gave a $(3+\Delta)/2$-approximation algorithm for the unrelated machines model when jobs arrive at the same time, and a $(2+\Delta)$-approximation when jobs are released over time (yet known offline). Central to unlocking an efficient approximation algorithm for the unrelated machines case was the introduction of a  time-indexed linear program that lower bounds the objective value of the optimal non-anticipatory scheduling policy. It is this LP that allows the authors to overcome the complexities of the stochastic unrelated machines setting.

The work introduced in this present paper targets the more realistic \emph{online} setting for  scheduling  stochastic jobs on unrelated machines.  A priori, it is not clear that there should exist an algorithm with small competitive ratio for this problem.  Prior work for the offline problem requires sophisticated linear \cite{SSU2016} or convex \cite{BBC2016} programming relaxations. Good candidates for online algorithms that might also have practical impact are desired to be simple and combinatorial, but even discovering an offline approximation algorithm that is simple and combinatorial has remained an open problem for (stochastic) scheduling on unrelated machines. 

\noindent
{\bf Related work for deterministic processing times.} For special cases and deterministic processing times, approximation algorithms have been known to exist. For example for the online unrelated machine case with deterministic
processing times, Hall, Schulz, Shmoys and Wein~\cite{HSSW97} obtain an $8$-competitive algorithm. Their algorithm is  based on the idea to partition time into geometrically increasing intervals, and then maximizing the total weight of (available) jobs that can be scheduled in these intervals. Algorithms with better competitive ratios have been obtained by Chakrabarti et al.~\cite{chakra96} by using randomization in the definition of these intervals, and resulting in a randomized 5.78-competitive algorithm. As far as we know, this is the state of the art when it comes to competitive analysis
for the online problem with release times and on unrelated machines. 
The deterministic greedy algorithm proposed in this paper is 6-competitive.

For the offline problem with deterministic processing times, the following is known. 
When there are no release times and processing times are deterministic, the currently best known approximation algorithms have performance bounds slightly below $3/2$, based on semidefinite relaxations \cite{BSS} and more recently also on linear relaxations \cite{Li17}. For the offline case with release times, the $(2+\varepsilon)$-approximation of \cite{SS-SIDMA} was the best known until recently Im and Li \cite{ImL16} gave a $1.878$-approximation algorithm.
The problem has also been looked at through the lens of game theory, and for the offline problem without release times, Cole et al.~\cite{ColeCGMO11} show that when machines follows the WSPT rule, Nash equilibria of selfish jobs that minimize their own completion time yield schedules which are at most a factor 4 above optimum. Interestingly, our paper shows that the same approximation guarantee can be obtained online by a simple greedy algorithm.  We note that the work of \cite{ColeCGMO11} is offline, and moreover their algorithm and analysis differs from that of this paper.

With respect to lower bounds on performance guarantees of online algorithms, 
we are aware of only one lower bound on the competitive ratio of any online algorithm, which is the $1.309$ lower bound of Vestjens~\cite{Vestjens97}; this lower bound  holds for the 
problem on identical machines with deterministic processing times.

\noindent
\textbf{Results.}  This paper suggests two {combinatorial} online algorithm for stochastic scheduling on unrelated machines that have a performance guarantee of order $\bigO{\Delta}$, where $\Delta$ is an upper bound on the squared coefficient of variation of the processing time distributions $\p_{ij}$.  More specifically, in the online-list model where jobs arrive online (at time 0) and must be assigned to a machine immediately upon arrival, this paper establishes a performance guarantee of $(4+2\Delta)$. For this problem the algorithm assigns jobs to machines so as to minimize the expected contribution to the objective function, while per machine the jobs are sequenced by largest ratio $w_i/\E[\p_{ij}]$ first. For deterministic processing times, the proposed greedy algorithm has a competitive ratio of 4, and the paper also gives a lower bound instance showing that the analysis is tight. Arguably more relevant is the online-time model where jobs arrive over time at individual release times. Here, the paper establishes a performance guarantee of  \red{$(7.216+3.608\Delta)h(\Delta)$}, where $h(\Delta)  = 1+\sqrt{\Delta}/2$ for  $ \Delta \leq 1$, and  $h(\Delta)  = 1+\Delta/(\Delta+1)$ for $\Delta \geq 1$. Observe that $h(\cdot)$ is a concave, increasing function of $\Delta$ which is bounded from above by $2$. Here, the greedy assignment of jobs to machines is the same, but the sequencing per machine is augmented by possibly introducing forced idle time. The idea is to work with a ``nominal'' schedule based on expected processing times, and never start processing a job before its nominal starting time.
For deterministic processing times $\Delta=0$ and $h(\Delta)=1$, hence the competitive ratio equals \red{$7.216$}. As the algorithm is a deterministic algorithm, this improves upon the competitive ratio $8$ from \cite{HSSW97}, \red{but} falls slightly behind the randomized $5.78$-competitive algorithm that was proposed in \cite{chakra96}.

Even though the performance bounds for the case with nontrivial release times are most probably not tight, we believe our results are interesting for the following reasons:
(1) It is the first analysis of a combinatorial algorithm for stochastic scheduling on unrelated machines, and the first result for stochastic online scheduling in the unrelated machine model. (2) Even for the deterministic setting, it is the first time to analyze an (arguably) intuitive combinatorial algorithm that simply assigns jobs to machines where their expected contribution is minimal.
(3) The analysis uses the idea of dual fitting, hence we demonstrate that this technique can be used for bounding the performance of scheduling policies in non-preemptive and stochastic scheduling. (4) The performance bounds, even where not tight, have the same order of magnitude as those of earlier results in the literature, while considering a more general problem.

We now briefly discuss the proposed algorithms in relation to prior work. The algorithms rest on the following ingredients to solve the machine assignment and the scheduling problem. Generally speaking, at any point in time a job with highest ratio $w_j/\E[\p_{ij}]$ is scheduled from the set of jobs that are assigned to and available on machine~$i$. This is the well known WSEPT rule. For the case with release dates, however, jobs possibly have to wait for ``artificial'' release times before they are declared available for processing. The necessity of such forced idle time is well known whenever jobs are released over time, even for single machine problems \cite{ms2004}. Next to this standard manipulation of release times, we work with a ``nominal'' schedule that is based on expected processing times, and never allow jobs to be started before their nominal starting times.
The assignment of jobs to machines is solved by greedily assigning jobs to the machines where (a proxy for) the expected increase of the objective is minimal. Comparable greedy-type algorithms have been used also before, e.g.\ in \cite{AA2003,ms2004,muv2006}, however not for an unrelated machine setting.
Note that the $\Omega(\Delta)$ lower bound for fixed assignment policies in \cite{SSU2016} yields that our results are asymptotically tight in $\Delta$ among policies that must irrevocably assign jobs to machines at the time of their release.
As mentioned above, the analysis proposed in this paper uses dual fitting techniques.  The technique has been used e.g.\ in  \cite{AnandGK12} for deterministic and preemptive scheduling problems.


\section{Notation \& Preliminaries.}

The input to the problem consists of a set of unrelated parallel machines~$M$ of cardinality~$m$. 
The set of jobs $J$, of cardinality $n$, is unknown and only disclosed gradually over time. Each job needs to be executed on exactly one (and any one) of the machines in~$M$, and each machine can process at most one job at a time. The jobs are nonpreemptive. This means that a job, once started, must not be interrupted until its completion.

This paper considers  two online models. In the first model, known as \emph{online-list}, the scheduler is presented the jobs $j\in J$ one after the other. Whenever a job is presented, the algorithm has to assign it to one of the machines before the next job is presented. The decision when the job begins being processed can be deferred until all jobs have arrived. It is unknown how many jobs will arrive, but once all jobs in $J$ have arrived, the jobs assigned to any one of the machines must be sequenced on that machine in some order. In the second model, known as \emph{online-time}, time progresses and jobs appear over time at their individual release times.  Let $r_j$ denote the release time of job $j$.
At the moment of arrival $r_j$, or possibly at a later point in time, a job must be assigned to a machine. Once assigned to a machine, the job may possibly wait until an even later point in time to be processed.

The jobs are stochastic, meaning that each job $j$'s processing time is revealed to the scheduler at the point of arrival in the form of a random variable~$\p_{ij}$ for every machine~$i\in M$.  If job~$j$ is assigned to machine~$i$, its processing time will be random according to~$\p_{ij}$. It is allowed that certain jobs $j\in J$ cannot be processed on certain machines~$i\in M$, in which case $\E[\p_{ij}]=\infty$.

In the stochastic scheduling model, the realization of the processing time of a job~$j$ becomes known at the moment that the job completes. This paper considers designing a non-anticipatory scheduling policy $\Pi$ that  minimizes the expected total weighted completion time~$\E\bigl[\sum_j w_jC_j\bigr]$, where~$C_j$ denotes the random variable for the completion time of job~$j$ under policy $\Pi$.

We assume that the random variables $\p_{ij}$ are discrete and integer valued. This can be assumed at the cost of a multiplicative factor of $(1+\eps)$ in the final approximation ratio, for any $\eps>0$ \cite{SSU2016}.
Our analysis will  make use of the following facts about first and second moments of discrete random variables; these facts also appear in \cite{SSU2016}.
\begin{lemma}\label{lem:moment}
Let~$X$ be an integer-valued, nonnegative random variable. Then,
\begin{align*}
\sum_{r\in\BZ_{\geq0}}\Prb{X>r}&=\E[X]\quad
\text{and}\quad	
\sum_{r\in\BZ_{\geq0}}(r+\tfrac12)\,\Prb{X>r}=\frac12 \E[X^2]\,.
\end{align*}
\end{lemma}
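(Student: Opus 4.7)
The plan is to prove both identities by the standard Fubini/swap-of-summation technique, writing $\Prb{X>r}=\sum_{k>r}\Prb{X=k}$ and interchanging the order of the two sums. Since $X$ is integer-valued and nonnegative, both double sums have nonnegative terms, so Tonelli's theorem justifies the swap without any convergence issues.

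First, I would establish identity one. Starting from $\sum_{r\geq0}\Prb{X>r}=\sum_{r\geq0}\sum_{k\geq r+1}\Prb{X=k}$ and swapping the order of summation, the inner sum over $r$ ranges over $r=0,1,\dots,k-1$, contributing a factor of $k$. Hence the expression equals $\sum_{k\geq1}k\,\Prb{X=k}=\E[X]$.

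Next, I would establish identity two in exactly the same way. After swapping, the inner sum is $\sum_{r=0}^{k-1}(r+\tfrac12)$, which telescopes to $\tfrac{(k-1)k}{2}+\tfrac{k}{2}=\tfrac{k^2}{2}$. Therefore
\[
\sum_{r\geq 0}(r+\tfrac12)\Prb{X>r}=\sum_{k\geq1}\Prb{X=k}\cdot\tfrac{k^2}{2}=\tfrac12\E[X^2].
\]

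There is no real obstacle here: the only mildly non-trivial step is noticing that the half-integer shift $r+\tfrac12$ is precisely what is needed to convert the triangular sum $\sum_{r=0}^{k-1}r=\tfrac{k(k-1)}{2}$ into the clean expression $\tfrac{k^2}{2}$, which is why the factor $\tfrac12$ appears on the right-hand side. This is also the reason the second identity is stated with the offset $\tfrac12$ rather than just $r$; without it one would pick up a correction term involving $\E[X]$.
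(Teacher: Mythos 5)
Your proof is correct: the interchange of summation is justified by nonnegativity, and the computation $\sum_{r=0}^{k-1}(r+\tfrac12)=\tfrac{k^2}{2}$ gives exactly the stated second identity. The paper itself does not prove this lemma (it only cites prior work for these facts), and your argument is the standard one that would be written there, so nothing further is needed.
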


\begin{definition}
Let~$X$ be a nonnegative random variable. The \emph{squared coefficient of variation} is defined as the scaled variance of $X$. That is,
\[
\CV[X]^2:=\Var[X]/\E[X]^2\,,
\]
where $\Var[X]=\E[X^2]-E[X]^2$.
\end{definition}

\subsection{Stochastic Online Scheduling \& Policies}
The setting considered in this paper is that of stochastic online scheduling as defined in \cite{muv2006}.  This means that (the existence of) a job $j$ is unknown before it arrives, and upon arrival at time $r_j$, only the distribution of the random variables $P_{ij}$ for the possible processing times on machines $i=1,\dots,m$ are known to the scheduler. At any given time $t$, a non-anticipatory online scheduling policy is allowed to use only the information that is available at time $t$. In particular, it may anticipate the (so far) realized  processing times of jobs up to time $t$. For example, a job that has possible sizes 1, 3 or 4 with probabilities 1/3 each, and has been running for 2 time units, will have a processing time 3 or 4, each with probability 1/2. It is well known that adaptivity over time is needed in order to minimize the expectation of the total weighted completion time, e.g. \cite{uetz03}. We refer the reader to \cite{muv2006} for a more thorough discussion of the stochastic online model.

For simplicity of notation, denote $\opt$ as the expected total weighted completion time of an optimal, non-anticipatory scheduling policy for the problem where the set of jobs, their release times $r_j$ and their processing time distributions $\p_{ij}$ are known in advance. That is to say, the benchmark that we compare our algorithms to, knows the set of jobs and their parameters, but not the actual realizations of processing time distributions $\p_{ij}$.

We seek to find a non-anticipatory online scheduling policy (an algorithm) $\algo$ with expected performance $\algo$ close to $\opt$. For convenience, and in a slight abuse of notation we use the same notation for both the algorithm and its expected performance. That is to say, both $\algo$ and $\opt$ denote the expected performance of non-anticipatory scheduling policies, and by linearity of expectation we have
$\algo= \sum_{j}w_j\E[C_j^{\algo}]$ and $\opt= \sum_{j}w_j\E[C_j^{\opt}]$.

\begin{definition}
A scheduling policy  is said to have a (multiplicative) \emph{performance guarantee} $\alpha\geq 1$, if for every possible input instance, 
\[
\algo\leq \alpha\opt\,.
\] 
\end{definition}
We remark that $\opt$ is not restricted to assigning jobs to machines at the time of their arrival. The only restriction on $\opt$ is that it must schedule jobs non-preemptively, and that it is non-anticipatory. Note that our approximation guarantees hold against an adversary who knows all the jobs and their release times $r_j$, as well as the processing time distributions $\p_{ij}$ in advance, but not the actual realizations of $P_{ij}$. That implies that the model generalizes the classic offline stochastic scheduling model (assuming all paramaters are disclosed to the scheduler, too), as well as traditional competitive analysis (assuming deterministic processing times).

Finally, we may assume w.l.o.g.\ that no pair of job and machine exists with $\E[\p_{ij}]=0$. That said, we may further assume that $\E[\p_{ij}]\ge 1$ for all machines $i$ and jobs $j$, by scaling.


\section{Linear Programming Relaxations.} \label{sec:LP}
This section introduces a linear programming relaxation for the problem.  This relaxation was previously discussed  in \cite[\S 8]{SSU2016}. The LP uses variables~$y_{ijs}$ to denote the probability that job~$j$ is being processed on machine~$i$ within the time interval~$[s,s+1]$, under some given and fixed scheduling policy. It is known that $y_{ijs}$ can be linearly expressed in terms of the variables $x_{ijt}$, which denote the probability that job $j$ is started at time $t$ on machine $i$, as follows
\begin{align}
y_{ijs}=\sum_{t=0}^s x_{ijt}\, \Prb{\p_{ij}>s-t}\enspace.\label{eq:fromxtoy}
\end{align}
The fact that any machine can process at most one job at a time can be written as
\begin{align}
\sum_{j\in J}y_{ijs}\leq 1 \qquad\text{for all $i\in M$, $s\in\BZ_{\geq0}$.}\label{eq:LPy:mach-cap}	
\end{align}
Moreover, by
the fact that scheduling policies are non-anticipatory we know that whenever a job $j$ is started on a machine $i$ at time $t$, it will in expectation be processed for time $\E[\p_{ij}]$, so its expected completion time is $t+\E[\p_{ij}]$. Now, conditioning on a job being processed on machine $i$, 
making use of \eqref{eq:fromxtoy} and the first part of Lemma~\ref{lem:moment}, together with the fact that each job must be completely processed, 
gives the constraint that $\sum_{s\in\BZ_{\geq0}}\frac{y_{ijs}}{\E[\p_{ij}]}=1$. Unconditioning on the machine assignment 
yields the 
following constraints
\begin{align}
\sum_{i\in M}\sum_{s\in\BZ_{\geq0}}\frac{y_{ijs}}{\E[\p_{ij}]}=1 \qquad\text{for all~$j\in J$}\,.\label{eq:LPy:assignment}
\end{align}
Finally, with the help of \eqref{eq:fromxtoy} and the second part of Lemma~\ref{lem:moment}, the expected completion time of a job~$j$ can be expressed in $y_{ijs}$ variables as
\begin{align}
C_j^{\ref{sto}}:=\sum_{i\in M}\sum_{s\in\BZ_{\geq0}}
\left(\frac{y_{ijs}}{\E[\p_{ij}]}\,\bigl(s+\tfrac12\bigr)+\frac{1-\CV[\p_{ij}]^2}{2}\,y_{ijs}\right) \qquad\text{for all~$j\in J$\,,}\label{eq:LPy:compl-time}
\end{align}
where we labeled the expected completion time variables with a superscript \ref{sto} for ``stochastic'', for reasons that will become clear shortly. For completeness, equation \eqref{eq:LPy:compl-time} is proved in Lemma~\ref{lemma:LPy:compl-time} (Appendix~\ref{sec:aux}).

For the analysis to follow, we also need to express the fact that the expected completion time of a job cannot be smaller than its expected processing time, which is generally not implied by \eqref{eq:LPy:compl-time}.
\begin{align}
C_j^{\ref{sto}}\geq \sum_{i\in M}\sum_{s\in\BZ_{\geq0}}y_{ijs}\qquad\text{for all~$j\in J$.}\label{eq:LPy:additional}	
\end{align}

The following LP relaxation for the unrelated machine scheduling problem can be derived with these observations.  This LP extends the LP  given in \cite{SSU2016} by adding the  constraints \eqref{eq:LPy:additional}.
\begin{align}
\tag{S}
\label{sto}
\begin{split}
\min\quad&z^{\ref{sto}}=\sum_{j\in J}w_j\,C_j^{\ref{sto}}\\
\text{s.t.}\quad&\text{\eqref{eq:LPy:mach-cap}, \eqref{eq:LPy:assignment}, \eqref{eq:LPy:compl-time}, \eqref{eq:LPy:additional}}\\
&y_{ijs}\geq0\qquad\qquad\text{for all~$j\in J$, $i\in M$, $s\in\BZ_{\geq0}$.}
\end{split}
\end{align}

The analysis in this paper will work with the dual of this relaxation. However the term $-\CV[\p_{ij}]^2$ in the primal objective would appear in the dual constrains. As we do not know how to deal with this negative term in the analysis that is to follow, we are going to factor it out.

To that end, define a simpler, \ie deterministic version for the expected completion times \eqref{eq:LPy:compl-time}, labeled with ``\ref{det}'' to distinguish it from the previous formulation, by letting
\begin{align}
\label{eq:LPy:simple-compl-time} C_j^{\ref{det}} = \sum_{i\in M}\sum_{s\in\BZ_{\geq 0}} \left(\frac{y_{ijs}}{\E[\p_{ij}]}\,\bigl(s+\tfrac12\bigr) +\frac{y_{ijs}}{2}\right) \qquad\text{for all~$j\in J$.} 
\end{align}
Consider the following linear programming problem
\begin{align}
\tag{P}
\label{det}
\begin{split}
\min\quad&z^{\ref{det}}=\sum_{j\in J}w_j\,C_j^{\ref{det}}\\
\text{s.t.}\quad&\text{\eqref{eq:LPy:mach-cap}, \eqref{eq:LPy:assignment}, \eqref{eq:LPy:simple-compl-time}}\\
&y_{ijs}\geq0\qquad\qquad\text{for all~$j\in J$, $i\in M$, $s\in\BZ_{\geq0}$\,.}
\end{split}
\end{align}
This corresponds to a time-indexed linear programming relaxation for a purely deterministic, unrelated machine scheduling problem where the random processing times are fixed at their expected values $\E[\p_{ij}]$. Also note that we have dropped constraints \eqref{eq:LPy:additional}.

In the following, a relationship between these two relaxations is established. To begin, define an upper bound on the squared coefficient of variation by
\begin{definition}
Define $\Delta$ as a universal upper bound on the squared coefficient of variation of the processing time of any job on any machine, that is 
\[
\Delta:=\max_{i,j} \CV[\p_{ij}]^2\,.
\]
\end{definition}
Observe that $\Delta=0$ for deterministic processing times, and $\Delta=1$ for processing times that are NBUE (new better than used in expectation), that is, the expected remaing processing time of a job never exceeds its total expected processing time. Specifically, $\Delta=1$ for exponential distributions.
Next, for any given solution $\boldsymbol{y}$ of \eqref{sto} or \eqref{det},  define
\[
H(\boldsymbol{y}):=\sum_{j\in J}w_j\sum_{i\in M}\sum_{s\in\BZ_{\geq0}}y_{ijs}\,.
\]
Let $\boldsymbol{y}^{\ref{sto}}$ denote an optimal solution to \eqref{sto} and recall that $\opt$ is the expected total weighted completion time of an optimal non-anticipatory scheduling policy.  By constraints \eqref{eq:LPy:additional},
\[
H(\boldsymbol{y}^{\ref{sto}}) = \sum_{j\in J}w_j\sum_{i\in M}\sum_{s\in\BZ_{\geq0}}{y}^{\ref{sto}}_{ijs}\le \sum_{j\in J} w_jC_j^{\ref{sto}}= z^{\ref{sto}}(\boldsymbol{y}^{\ref{sto}})\le \opt\,.
\]

The following lemma establishes the relation between the two relaxations and is crucial for our analysis.

\begin{lemma}
\label{lem:BoundlingLP-Det}
The optimal solution values $z^{\ref{det}}$ and $z^{\ref{sto}}$ of the linear programming relaxations \eqref{det} and \eqref{sto} fulfill
\[
z^{\ref{det}} \leq \bigl(1+\frac{\Delta}{2}\bigr)z^{\ref{sto}}\,.
\]
\end{lemma}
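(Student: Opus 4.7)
The plan is to take the optimal solution $y^{\ref{sto}}$ of \eqref{sto} and evaluate the objective of \eqref{det} on it, then exploit the pointwise identity relating $C_j^{\ref{det}}$ and $C_j^{\ref{sto}}$ together with the extra constraint \eqref{eq:LPy:additional}.

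First I would observe that any $y$ feasible for \eqref{sto} is automatically feasible for \eqref{det}, because \eqref{det} only enforces \eqref{eq:LPy:mach-cap}, \eqref{eq:LPy:assignment} and $y\ge 0$, which are all constraints of \eqref{sto} as well (constraint \eqref{eq:LPy:additional} is simply dropped). In particular $y^{\ref{sto}}$ is feasible for \eqref{det}, so
\[
z^{\ref{det}} \;\le\; \sum_{j\in J} w_j\, C_j^{\ref{det}}(y^{\ref{sto}})\,.
\]

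Next I would compare the two completion-time expressions termwise. Subtracting \eqref{eq:LPy:compl-time} from \eqref{eq:LPy:simple-compl-time} and simplifying, the $(s+\tfrac12)/\E[\p_{ij}]$ contributions cancel and one is left with
\[
C_j^{\ref{det}}(y) - C_j^{\ref{sto}}(y) \;=\; \sum_{i\in M}\sum_{s\in\BZ_{\geq 0}} \frac{\CV[\p_{ij}]^2}{2}\, y_{ijs}
\;\le\; \frac{\Delta}{2}\sum_{i\in M}\sum_{s\in\BZ_{\geq 0}} y_{ijs}\,,
\]
using the definition of $\Delta$. Multiplying by $w_j$ and summing over $j$ gives
\[
\sum_{j\in J} w_j\, C_j^{\ref{det}}(y^{\ref{sto}}) \;\le\; \sum_{j\in J} w_j\, C_j^{\ref{sto}}(y^{\ref{sto}}) + \frac{\Delta}{2}\, H(y^{\ref{sto}}) \;=\; z^{\ref{sto}} + \frac{\Delta}{2}\, H(y^{\ref{sto}})\,.
\]

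Finally, I would invoke constraint \eqref{eq:LPy:additional}, which is present precisely in \eqref{sto}: multiplying it by $w_j\ge 0$ and summing yields $H(y^{\ref{sto}})\le z^{\ref{sto}}$. Substituting this bound closes the chain and yields $z^{\ref{det}}\le (1+\Delta/2)\, z^{\ref{sto}}$. There is no real obstacle here; the only subtle point worth highlighting is why \eqref{eq:LPy:additional} was added to \eqref{sto} in the first place — it is precisely to turn the ``error term'' $H(y^{\ref{sto}})$ produced when passing from the stochastic to the deterministic objective into a quantity controllable by $z^{\ref{sto}}$ itself.
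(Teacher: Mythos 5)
Your proposal is correct and follows essentially the same route as the paper's proof: evaluate the (P)-objective at the optimal (S)-solution (which is feasible since (P) is less constrained), identify the termwise gap $\sum_{i,s}\tfrac{\CV[\p_{ij}]^2}{2}y_{ijs}$, bound it by $\tfrac{\Delta}{2}H(y^{\ref{sto}})$, and control $H(y^{\ref{sto}})\le z^{\ref{sto}}$ via constraint \eqref{eq:LPy:additional}. No discrepancies worth noting.
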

\begin{proof}{Proof.}
	 Let $\boldsymbol{y}^{\ref{det}}$ be an optimal solution to \eqref{det}
	and  $\boldsymbol{y}^{\ref{sto}}$ be an optimal solution to \eqref{sto}. Clearly,   $\boldsymbol{y}^{\ref{sto}}$ is a feasible solution also for \eqref{det} which is less constrained. Hence we get the following,  where $ z^{\ref{det}}(\boldsymbol{y}^{\ref{det}})$ is the value of $\boldsymbol{y}^{\ref{det}}$ on LP \eqref{det}.
	\begin{eqnarray}
	\begin{split}
	z^{\ref{det}}= z^{\ref{det}}(y^{\ref{det}}) &\le  z^{\ref{det}}(y^{\ref{sto}})\\
	&= z^{\ref{sto}}(\boldsymbol{y}^{\ref{sto}}) +\sum_{j\in J}w_j\sum_{i\in M}\sum_{s\in\BZ_{\geq0}}\frac{\CV[\p_{ij}]^2}{2} y_{ijs}^S\\
	&\le  z^{\ref{sto}}(\boldsymbol{y}^{\ref{sto}}) +\frac{\Delta}{2}H(\boldsymbol{y}^{\ref{sto}})\\
	&\le  \bigl(1+\frac{\Delta}{2}\bigr)z^{\ref{sto}}(\boldsymbol{y}^{\ref{sto}})\,.
	\end{split}
	\end{eqnarray}
	Note that the second-to-last inequality only uses the definitions of $\Delta$ and $H(\cdot)$. The last inequality holds because  $H(\boldsymbol{y}^{\ref{sto}})\le z^{\ref{sto}}(\boldsymbol{y}^{\ref{sto}})$. 
	\hfill \halmos
\end{proof}

Recalling that \eqref{sto} is a relaxation for the stochastic scheduling problem, we conclude the following.
\begin{corollary}
\label{cor:BoundingPrimal}
The optimal solution value $z^{\ref{det}}$ of the linear programming relaxation \eqref{det} is bounded by the expected performance of an optimal scheduling policy by
\[
z^{\ref{det}}\le \bigl(1+\frac{\Delta}{2}\bigr)\opt\,.
\]
\end{corollary}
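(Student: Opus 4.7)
The plan is to simply chain Lemma~\ref{lem:BoundlingLP-Det} with the fact that \eqref{sto} is a valid relaxation of the stochastic scheduling problem, so that $z^{\ref{sto}} \le \opt$. Most of this relaxation argument has already been spelled out in the excerpt: given any non-anticipatory policy $\Pi$, define $y_{ijs}$ as the probability that $\Pi$ is processing job $j$ on machine $i$ during $[s,s+1]$. The machine capacity constraint \eqref{eq:LPy:mach-cap} follows from the non-overlap of jobs on a machine; the assignment constraint \eqref{eq:LPy:assignment} follows from \eqref{eq:fromxtoy} together with Lemma~\ref{lem:moment} applied to $\p_{ij}$, since every job must be completed; the expected completion time identity \eqref{eq:LPy:compl-time} is the content of Lemma~\ref{lemma:LPy:compl-time}; and the additional constraint \eqref{eq:LPy:additional} is exactly the intuitive statement that the expected completion time of a job is at least its expected processing time, which holds for any non-anticipatory policy.

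Applied to an optimal non-anticipatory policy, this yields a feasible solution of \eqref{sto} whose objective value equals $\opt$, so $z^{\ref{sto}} \le \opt$. Combining with Lemma~\ref{lem:BoundlingLP-Det} immediately gives
\[
z^{\ref{det}} \le \bigl(1+\tfrac{\Delta}{2}\bigr)\,z^{\ref{sto}} \le \bigl(1+\tfrac{\Delta}{2}\bigr)\,\opt\,,
\]
which is the claim. There is no real obstacle here: the only point that requires a moment of thought is constraint \eqref{eq:LPy:additional}, which is the one distinguishing \eqref{sto} from the relaxation of \cite{SSU2016}; but since its right-hand side is a lower bound on the true expected completion time of job $j$ under $\Pi$, it is automatically satisfied by the $y$-vector induced by any non-anticipatory policy, so adding it preserves the relaxation property and the corollary follows at once.
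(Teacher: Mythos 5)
Your proposal is correct and matches the paper's (very brief) argument: the corollary is obtained exactly by chaining Lemma~\ref{lem:BoundlingLP-Det} with the relaxation property $z^{\ref{sto}}\le\opt$, where the feasibility of the $y$-vector induced by any non-anticipatory policy — including constraint \eqref{eq:LPy:additional} — is justified as in Section~\ref{sec:LP}. Your added detail on why \eqref{eq:LPy:additional} is automatically satisfied is simply an elaboration of what the paper states when introducing that constraint, not a different route.
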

The dual program of \eqref{det} will have unconstrained variables $\alpha_j$ for all $j\in J$ and nonnegative variables $\beta_{is}$ for all $i\in M$ and $s\in\BZ_{\geq0}$:
\begin{align}
\tag{D}
\label{dual}
\begin{split}
\max\quad &z^{\ref{dual}}  =\sum_{j\in J}\alpha_j\ -\ \sum_{i\in M}\sum_{s\in\BZ_{\geq0}} \beta_{is}\\
\text{s.t.} \quad& \frac{\alpha_j}{\E[\p_{ij}]} \le \beta_{is} + w_j\left(\frac{s+\frac12}{\E[\p_{ij}]} +\frac12\right) \text{ for all } i\in M,j\in J,s\in\BZ_{\geq0}\,,\\
&\beta_{is}\geq0 \hspace{25ex}\text{ for all } i\in M, s\in\BZ_{\geq0}\,.
\end{split}
\end{align}
Like the analysis in \cite{AnandGK12}, we will define a feasible solution for the dual \eqref{dual}, such that this solution corresponds to the schedule created by an online greedy algorithm for the original stochastic scheduling problem. Similar greedy algorithms have been used before, both in deterministic and stochastic scheduling on parallel machines, \eg\ in \cite{AA2003,ms2004,muv2006}. 


\section{Greedy Algorithm \& Analysis for the Online-List Model.} \label{sec:greedy}
In this section the online-list model is considered. Assume without loss of generality that the jobs are presented in the order $1,2\dots,|J|$. 
On any machine $i$, let $H(j,i)$ denote the jobs that have priority no less than that of job $j$ according to the ratios $w_k/\E[\p_{ik}]$, breaking ties by index. That is, 
$$H(j,i):=\{k\in J\mid w_k/\E[\p_{ik}]>w_j/\E[\p_{ij}]\} \cup \{k\in J\mid k\le j, w_k/\E[\p_{ik}]=w_j/\E[\p_{ij}]\}.$$
Note that $j\in H(j,i)$. Also let $L(j,i):=J\setminus H(j,i)$.
Further let $k\to i$ denote  that a job $k$ has been assigned to machine $i$ by the algorithm.

\subsection*{Greedy Algorithm (Online List Model).} Whenever a new job $j\in J$ is presented to the algorithm,  compute for each of the machines $i\in M$ the \emph{instantaneous expected increase} in the cost if job $j$ is assigned to machine $i$,
 and all jobs already present on $i$ are scheduled in non-increasing order of the ratios weight over expected processing time. 
Since the expected completion time of the new job $j$ will be determined by the sum of expected processing times of all jobs in $H(j,i)$, and all the jobs in $L(j,i)$ will be delayed in expectation by an additional time  $\E[\p_{ij}]$, this cost increase equals
\[
\incr{j\to i}:=w_j\biggl(\sum_{k\to i, k \le j,k\in H(j,i)}\E[\p_{ik}]\biggr) \ +\  \E[\p_{ij}]\sum_{k\to i, k<j,k\in L(j,i)}w_k\,.
\]  
The greedy algorithm assigns the job to one of the machines where this quantity is minimal. That is, a job is assigned to machine $m(j):=\text{argmin}_{i\in M}\{\incr{j\to i}\}$; ties broken arbitrarily. Once all jobs have arrived and are assigned, the jobs assigned to a fixed machine are sequenced in non-increasing order of their ratio of weight over expected processing time. This WSEPT ordering is optimal conditioned on the given assignment \cite{Rothkopf66}.

The analysis of this greedy algorithm will proceed by defining a dual solution  $(\alpha,\beta)$ in  a way similar to that  done in \cite{AnandGK12}. Let
\[
\alpha_j:=\incr{j\to m(j)}\quad\text{for all }j\in J\,.
\]
That is, $\alpha_j$ is defined as the instantaneous expected increase in the total weighted completion time on the machine  job $j$ is assigned to by the greedy algorithm. Let
\[
\beta_{is}:=\sum_{j\in A_i(s)}w_j\,,
\]
where $A_i(s)$ is defined as the total set of jobs assigned to machine $i$ by the greedy algorithm, but restricted to those that have not yet been completed by time $s$ if the jobs' processing times were their expected values $\E[\p_{ij}]$. In other words, $\beta_{is}$ is exactly  the expected total weight of yet unfinished jobs on machine $i$ at time $s$, given the assignment (and sequencing) of the greedy algorithm.

It is now shown that these dual variables are feasible for the dual linear program. Later this fact will allow us to relate the variables to the optimal solution's objective.
\begin{lemma}
\label{fact:dual_feas}
The solution $(\boldsymbol{\alpha}/2,\boldsymbol{\beta}/2)$ is feasible for \eqref{dual}. 
\end{lemma}

\begin{proof}{Proof.}
This proof shows that
\begin{equation}
\label{eq:dual-feas}
\frac{\alpha_j}{\E[\p_{ij}]} \le  \beta_{is} + w_j\left(\frac{s}{\E[\p_{ij}]} +1 \right)\,
\end{equation}
holds for all $i\in M$, $j\in J$, and $s\in\BZ_{\geq0}$. This implies 
the feasibility of $(\boldsymbol{\alpha}/2,\boldsymbol{\beta}/2)$ for \eqref{dual}.
Fix  a job $j$ and machine $i$, and recall that $k\to i$ denotes a job $k$ being assigned to machine $i$ by the greedy algorithm.
By definition of $\alpha_j$ and by choice of $m(j)$ as the minimizer of $\incr{j\to i}$, for all $i$ it is the case that 
\begin{align}
\label{eq:roof_dual_feas1}
\frac{\alpha_j}{\E[\p_{ij}]}
&\le
\frac{\incr{j\to i}}{\E[\p_{ij}]}=w_j+w_j\!\!\!\!\!\!\!\sum_{k\to i, k< j,k\in H(j,i)}\frac{\E[\p_{ik}]}{\E[\p_{ij}]}\ +\ \sum_{k\to i, k<j,k\in L(j,i)}w_k\,.
\end{align}

Next, we are going to argue that the right-hand-side of \eqref{eq:roof_dual_feas1} is
upper bounded by the right-hand side of \eqref{eq:dual-feas}, from which the claim follows. Observe that the term $w_j$ cancels.
Observe that any job $k\to i$, $k\neq j$, can appear in the right-hand side of \eqref{eq:roof_dual_feas1} at most once, either with value $w_k$, namely when $k\in L(j,i)$, or with value $w_j\E[\p_{ik}]/\E[\p_{ij}]\leq w_k$ when $k\in H(j,i)$.  We show that each of these values in the right-hand-side of \eqref{eq:roof_dual_feas1} is
 accounted for in the right-hand side of \eqref{eq:dual-feas}, for any $s\geq 0$. 

Fix any such job $k\to i$. First consider the case that the time $s$ is small enough so that our job $k\to i$ is still alive at time $s$, so $s<\sum_{\ell\to i, \ell\in H(k,i)}\E[\p_{i\ell}]$. Then, $w_k$ is accounted for in the definition of $\beta_{is}$.

Now consider the case that $s\geq \sum_{\ell\to i, \ell\in H(k,i)}\E[\p_{i\ell}]$, which means that job $k$ is already finished at time $s$. In this case, we distinguish two cases.

\emph{Case 1} is $k\in L(j,i)$: In this case, job $k$ contributes to the right-hand side of \eqref{eq:roof_dual_feas1} a value of $w_k$, but as $s\geq \sum_{\ell\to i, \ell\in H(k,i)}\E[\p_{i\ell}]$,  the term $w_j({s}/{\E[\p_{ij}]})$ in the right-hand side of \eqref{eq:dual-feas} contains the term $w_j(\E[\p_{ik}]/\E[\p_{ij}])\geq w_k$.

\emph{Case 2} is $k\in H(j,i)$: In this case, job $k$ contributes to the right-hand side of \eqref{eq:roof_dual_feas1} a value of $w_j(\E[\p_{ik}]/\E[\p_{ij}])$, which is exactly what is also contained in the term $w_j({s}/{\E[\p_{ij}]})$, because $s\geq \sum_{\ell\to i, \ell\in H(k,i)}\E[\p_{i\ell}]$.\hfill\halmos
\end{proof}

In the following lemma, the online algorithm's objective is expressed in terms of the dual variables, which follows more or less directly from the definition of the dual variables $(\boldsymbol{\alpha},\boldsymbol{\beta})$. Let us denote by $\algo$ the total expected value achieved by the greedy algorithm. 

\begin{lemma}
\label{lem:dual_value}
The total expected value of the greedy algorithm is
\[
\algo = \sum_{j\in J}\alpha_j=\sum_{i\in M}\sum_{s\in\BZ_{\geq0}}\beta_{is}\,.
\]
\end{lemma}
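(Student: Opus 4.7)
The plan is to prove the two equalities $\algo=\sum_j\alpha_j$ and $\algo=\sum_{i,s}\beta_{is}$ independently. As a preliminary observation, note that even though the greedy algorithm must commit each assignment online, the \emph{sequencing} on every fixed machine is WSPT (as stated in the algorithm description, citing \cite{Rothkopf66}). Thus, for every job $j$ with $i(j)=i$, its expected completion time is deterministic in terms of the expected processing times on $i$:
\[
\E[C_j]=\sum_{k\to i,\,k\in H(j,i)}\E[\p_{ik}]\,,
\]
where the summand for $k=j$ is included because $j\in H(j,i)$ by the tie-breaking rule. Consequently $\algo=\sum_{j\in J} w_j\,\E[C_j]$, and this is the quantity both sums must equal.

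For $\algo=\sum_j\alpha_j$, I would unfold $\alpha_j=\incr{j\to i(j)}$ and rewrite $\sum_j\alpha_j$ as a double sum over ordered pairs $(j,k)$ of jobs assigned to the same machine $i$. The first term of $\incr{j\to i(j)}$ contributes $w_j\,\E[\p_{ik}]$ for every $k\to i$ with $k\le j$ and $k\in H(j,i)$ (including the diagonal $k=j$, which contributes $w_j\,\E[\p_{ij}]$). The second term contributes $w_k\,\E[\p_{ij}]$ for every $k\to i$ with $k<j$ and $k\in L(j,i)$. Swapping the roles of $j$ and $k$ in this second sum turns it into a contribution $w_j\,\E[\p_{ik}]$ for pairs $(j,k)$ with $j<k$, both assigned to $i$, and $k$ of strictly higher priority than $j$. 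Combining the two yields exactly one contribution $w_j\,\E[\p_{ik}]$ for every $(j,k)$ with $j,k\to i$ and $k\in H(j,i)$, which is precisely $\sum_j w_j\,\E[C_j]=\algo$.

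For $\algo=\sum_{i,s}\beta_{is}$, I would invoke the standard time-integral identity for weighted completion time. Swapping the order of summation,
\[
\sum_{i}\sum_{s\in\BZ_{\ge 0}}\beta_{is}=\sum_{i}\sum_{j\to i}w_j\cdot\bigl|\{s\in\BZ_{\ge 0}:j\in A_i(s)\}\bigr|=\sum_j w_j\,\E[C_j]=\algo\,,
\]
because a job $j\to i$ lies in $A_i(s)$ precisely at the integer time points $s$ strictly before its completion under the deterministic schedule with sizes $\E[\p_{ij}]$, and the number of such points equals $\E[C_j]$ as computed above.

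\textbf{Main obstacle.} The delicate part is the bookkeeping in the first equality, which hinges on the asymmetric index-based tie-breaker used to define $H(j,i)$ and $L(j,i)$ (``$\le$'' versus ``$<$''). For pairs of jobs on the same machine with equal WSPT ratios, one must verify that the blame for a shared pairwise contribution is picked up exactly once---either in $\alpha_j$'s first sum (if $k<j$) or in the relabeled version of $\alpha_k$'s second sum (if $j<k$)---with neither double-counting nor leaving gaps, and the definitions are designed just so this works. The second equality is the routine ``unfinished weight'' identity and needs nothing beyond swapping the order of summation.
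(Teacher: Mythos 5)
Your proposal is correct and matches the paper's argument: the second equality is the same ``total weight of unfinished jobs per time slot'' identity, and your explicit pair-counting for $\sum_j\alpha_j$ is just a spelled-out version of the telescoping of instantaneous expected increases that the paper handles by citing the analogous lemma of Megow et al.\ (and it is the same standard index rearrangement the paper itself performs later in the proof of Lemma~\ref{lem:dual_value_online}). Your handling of the index-based tie-breaking, ensuring each same-machine pair is counted exactly once, is exactly the point that makes the rearrangement work.
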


\begin{proof}{Proof.}
For the first equality, recall that $\alpha_j$ is the  instantaneous increase in  $\algo$'s expected total weighted completion time. Summing this over all jobs gives exactly the total expected value of $\algo$'s objective. For a formal proof of this, see for example \cite[Lemma 4.1]{muv2006} for the case of parallel identical machines. That lemma and its proof can directly be applied to the case of unrelated machines. 

The second equality follows from the fact that the (expected) total weighted completion time of any schedule can be alternatively expressed by weighting each period of time by the total weight of yet unfinished jobs. The equality is true here, because $\boldsymbol{\beta}$ was defined on the basis of the same distribution of jobs over machines as given by $\algo$, and because each job $k$'s weight $w_k$, given $k\to i$, appears in $\beta_{is}$ for all times $s$ up to a job $k$'s expected completion time, given jobs' processing times are fixed to their expected values.
This is exactly what happens in computing the expected completion times under the greedy algorithm, because it is a ``fixed assignment'' algorithm that assigns all jobs to machines at time $0$, and sequences the jobs per machine thereafter. \hfill\halmos
\end{proof}



\section{Speed Augmentation \& Analysis}
The previous analysis of the dual feasible solution $(\boldsymbol{\alpha}/2,\boldsymbol{\beta}/2)$ yields a dual objective value equal to $0$ by Lemma~\ref{lem:dual_value}. This is of little help to bound the algorithm's performance. However following \cite{AnandGK12}, define another dual solution which has an interpretation in the model where all machines run at faster speed $f\geq 1$, meaning in particular that all (expected) processing times get scaled (down) by a factor $f$.  

Define
$\algo^{f}$ as the expected solution value obtained by the same greedy algorithm, except that all the machines run at a speed increased by a factor of $f$, where $f\ge 1$ is an integer. Note that $\algo=f\algo^f$, by definition. We denote by $(\boldsymbol{\alpha}^f,\boldsymbol{\beta}^f)$ the exact same dual solution that was defined before, only for the new instance with faster machines. The following establishes feasibility of a slightly modified dual solution.

\begin{lemma}
\label{lem:new_dual_feas} 
Whenever $f\geq 2$, the solution $(\boldsymbol{\alpha}^f,\frac{1}{f}\boldsymbol{\beta}^f)$  is a feasible solution for the dual \eqref{dual} in the \emph{original} $($unscaled\,$)$ problem instance. 
\end{lemma}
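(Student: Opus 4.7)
The plan is to reapply the stronger inequality that is actually proved inside Fact~\ref{fact:dual_feas} — namely \eqref{eq:dual-feas} — this time to the $f$-speed instance, and then absorb the resulting $1/f$ slack into the $+\frac12$ slack on the right-hand side of the dual constraint in \eqref{dual}. First I would observe that scaling every machine's speed by a common factor $f$ replaces each expected processing time $\E[\p_{ij}]$ by $\E[\p_{ij}]/f$, and hence scales every instantaneous increase $\incr{j\to i}$ uniformly by $1/f$. In particular, Greedy's argmin is unchanged, so it makes the \emph{same} assignment in the fast instance; consequently $\alpha^f$ and $\beta^f$ retain exactly the combinatorial meaning they had in the slow instance (the weight of jobs alive at time $s$ in the fast schedule, respectively the scaled instantaneous increase on the chosen machine), only evaluated with the scaled processing times.

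The case analysis in the proof of Fact~\ref{fact:dual_feas} (splitting according to whether a job $k\to i$ is still alive at time $s$, and then according to whether $k\in H(j,i)$ or $k\in L(j,i)$) uses nothing about the absolute magnitude of the processing times, so it transfers verbatim to the $f$-speed instance. It yields, for all $i\in M$, $j\in J$, $s\in\BZ_{\geq0}$,
\[
\frac{\alpha_j^f}{\E[\p_{ij}]/f} \le \beta_{is}^f + w_j\left(\frac{s}{\E[\p_{ij}]/f} + 1\right)\,.
\]
Dividing through by $f$ one obtains
\[
\frac{\alpha_j^f}{\E[\p_{ij}]} \le \frac{1}{f}\beta_{is}^f + w_j\left(\frac{s}{\E[\p_{ij}]} + \frac{1}{f}\right)\,.
\]

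To conclude feasibility of $(\alpha^f,\beta^f/f)$ for \eqref{dual} on the \emph{original} instance, it then suffices to check the scalar inequality
\[
\frac{1}{f} \le \frac{1}{2\E[\p_{ij}]} + \frac{1}{2}\,,
\]
which, because $\E[\p_{ij}]>0$, is automatic as soon as $f\ge 2$. This is precisely where the hypothesis enters, and it pinpoints $f=2$ as the smallest speedup for which the argument closes. The only real obstacle is conceptual rather than computational: one has to be comfortable that Fact~\ref{fact:dual_feas}'s combinatorial argument is invariant under a uniform speed scaling, so that its stronger intermediate bound transfers unchanged to the fast instance; once this is granted, the lemma reduces to the one-line rearrangement above.
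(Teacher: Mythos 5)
Your proof is correct and follows essentially the same route as the paper: both arguments reuse the stronger intermediate inequality \eqref{eq:dual-feas} from the proof of Fact~\ref{fact:dual_feas}, exploit that the greedy assignment is invariant under uniform speed scaling, and invoke $f\geq 2$ only at the end to absorb the remaining slack into the $+\tfrac12$ terms of \eqref{dual}. The only difference is a harmless reparameterization: you apply \eqref{eq:dual-feas} in the fast instance at time $s$ and divide by $f$, whereas the paper stays in the original instance and uses $\beta^f_{is}=\beta_{i(f\cdot s)}$, i.e.\ \eqref{eq:dual-feas} evaluated at time $f\cdot s$ — the two computations are identical.
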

\begin{proof}{Proof.}
By definition of $(\boldsymbol{\alpha}^f,\frac{1}{f}\boldsymbol{\beta}^f)$, to show feasibility  for \eqref{dual} it suffices to show the slightly stronger constraint that
\begin{align*}
\frac{\alpha_j^f}{\E[\p_{ij}]}
&\le
\frac{1}{f}\beta^f_{is}+w_j\left(\frac{s}{\E[\p_{ij}]}+\frac12\right)
\end{align*}
for all $i,j,s$. Indeed, in the above inequality we have only dropped the nonnegative term ${w_j}/{(2\E[\p_{ij}])}$ from the right-hand side of \eqref{dual}, hence the above implies the feasibility of $(\boldsymbol{\alpha}^f,\frac{1}{f}\boldsymbol{\beta}^f)$ for \eqref{dual}.
By definition of $\boldsymbol{\alpha}$ we have $\alpha_j=f\alpha_j^f$. So the above is equivalent to
\begin{equation}
\label{eq:proof_feas_new_dual1}
\frac{\alpha_j}{\E[\p_{ij}]}
 \le \beta^f_{is}+ w_j\left(\frac{f\cdot s}{\E[\p_{ij}]}+\frac{f}{2}\right)\,.
\end{equation}
As the assumption was that $f\geq 2$, \eqref{eq:proof_feas_new_dual1} is implied by
\begin{equation}
\label{eq:proof_feas_new_dual}
\frac{\alpha_j}{\E[\p_{ij}]}
 \le \beta^f_{is}+ w_j\left(\frac{f\cdot s}{\E[\p_{ij}]}+1\right)\,.
\end{equation}
Now observe that $\beta^f_{is}=\beta_{i(f\cdot s)}$ (and recall that $f$ is integer), so
 \eqref{eq:proof_feas_new_dual} is nothing but inequality \eqref{eq:dual-feas} with variable $s$ replaced by $f\cdot s$.  The validity of 
\eqref{eq:proof_feas_new_dual} therefore directly follows from \eqref{eq:dual-feas} in our earlier proof of Lemma~\ref{fact:dual_feas} to demonstrate the feasibility of $(\boldsymbol{\alpha}/2,\boldsymbol{\beta}/2)$ for \eqref{dual}. 
\hfill\halmos
\end{proof}


\bigskip

The first main theorem of the paper is now established.
\begin{theorem}
\label{thm:no_release_dates}
The greedy algorithm has a performance guarantee of $(4+2\Delta)$ for online scheduling of stochastic jobs on unrelated machines to minimize the expectation of the total weighted completion times $\E[\sum_jw_jC_j]$.  That is, $\algo \le (4+2\Delta)\opt$.
\end{theorem}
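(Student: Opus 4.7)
The plan is to combine weak LP duality applied to the speed-augmented dual solution from Lemma~\ref{lem:new_dual_feas} with the LP lower bound on $\opt$ from Corollary~\ref{cor:BoundingPrimal}, and then to optimize over the speed-up factor $f\ge 2$. The previous dual solution $(\alpha/2,\beta/2)$ was unhelpful because Lemma~\ref{lem:dual_value} forces its dual objective to $0$; the asymmetric rescaling $(\alpha^f,\frac{1}{f}\beta^f)$ in Lemma~\ref{lem:new_dual_feas} is precisely what breaks this cancellation.

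Concretely, first I would fix any $f\ge 2$ and, applying weak duality for the pair \eqref{det}--\eqref{dual} to the feasible solution $(\alpha^f,\frac{1}{f}\beta^f)$, write
\[
\sum_{j\in J}\alpha_j^f \;-\; \frac{1}{f}\sum_{i\in M}\sum_{s\in\BZ_{\geq 0}}\beta_{is}^f \;\le\; z^{\ref{det}} \;\le\; \Bigl(1+\tfrac{\Delta}{2}\Bigr)\opt\,,
\]
where the second inequality is Corollary~\ref{cor:BoundingPrimal}. Next, I would invoke Lemma~\ref{lem:dual_value} applied to the greedy algorithm running on the speed-$f$ instance, which gives $\sum_j\alpha_j^f=\sum_{i,s}\beta_{is}^f=\algo^f$, and combine this with the trivial scaling identity $\algo=f\algo^f$ (processing times and release times collapse by a factor $f$, so the weighted completion time does too). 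Substituting yields
\[
\algo^f\Bigl(1-\tfrac{1}{f}\Bigr) \;\le\; \Bigl(1+\tfrac{\Delta}{2}\Bigr)\opt\,, \qquad\text{hence}\qquad \algo \;\le\; \frac{f^2}{f-1}\,\Bigl(1+\tfrac{\Delta}{2}\Bigr)\,\opt\,.
\]

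To finish, I would minimize $f^2/(f-1)$ over $f\ge 2$; a one-line calculus check (derivative $f(f-2)/(f-1)^2$) shows the minimum is attained at $f=2$ with value $4$, yielding $\algo\le 4\,(1+\Delta/2)\,\opt = (4+2\Delta)\,\opt$, which is exactly the claim.

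Since every ingredient is already in place, there is no real technical obstacle; the only bookkeeping point worth verifying is that Lemma~\ref{lem:dual_value} transfers unchanged to the speed-$f$ instance. This is immediate: its proof only uses that $\alpha_j$ is the algorithm's instantaneous expected increase in weighted completion time and that $\beta_{is}$ is the total weight of jobs still alive at time $s$ under the greedy fixed assignment when processing times are set to their expectations --- both properties are preserved under uniform speed scaling, which merely reparametrizes the time axis.
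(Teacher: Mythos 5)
Your proposal is correct and follows essentially the same route as the paper: weak duality applied to the feasible dual solution $(\alpha^f,\tfrac{1}{f}\beta^f)$ from Lemma~\ref{lem:new_dual_feas}, combined with Corollary~\ref{cor:BoundingPrimal}, Lemma~\ref{lem:dual_value} on the speed-$f$ instance, the scaling identity $\algo=f\algo^f$, and minimization of $f^2/(f-1)$ at $f=2$. Your closing remark that Lemma~\ref{lem:dual_value} transfers unchanged under uniform speed scaling is exactly the (implicit) justification the paper relies on as well.
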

\begin{proof}{Proof.}
We know from Corollary~\ref{cor:BoundingPrimal} that  $z^{\ref{dual}}(\boldsymbol{\alpha}^f,\frac{1}{f}\boldsymbol{\beta}^f)\le z^{\ref{dual}}=z^{\ref{det}}\le \bigl(1+\frac{\Delta}{2}\bigr)\opt\,$, given that $f\geq 2$.
Next, recall that $\algo^f=\sum_{j\in J}\alpha^f_j=\sum_{i\in M}\sum_{s\in\BZ_{\geq 0}}\beta^f_{is}$ by Lemma~\ref{lem:dual_value}, and $\algo=f\algo^f$.
 The theorem now follows from evaluating the objective value of the specifically chosen dual solution $(\boldsymbol{\alpha}^f,\frac{1}{f}\boldsymbol{\beta}^f)$ for \eqref{dual}, as
\[
z^{\ref{dual}}(\boldsymbol{\alpha}^f,\frac{1}{f}\boldsymbol{\beta}^f)=\sum_{j\in J}\alpha_j^f-\frac{1}{f}\sum_{i\in M}\sum_{s\in\BZ_{\geq0}}\beta^f_{is}=\frac{f-1}{f}\algo^f = \frac{f-1}{f^2}\algo\,.
\]
Putting together this equality with the previous inequality yields a performance bound equal to $\frac{f^2}{f-1}(1+\frac{\Delta}{2})$, where we have the constraint that $f\geq 2$. This term is minimal and equal to $(4+2\Delta)$, exactly when we choose $f=2$.
\hfill\halmos
\end{proof}

We end this section with the following theorem, which we believe was unknown before.
\begin{theorem}
The greedy algorithm for the deterministic online scheduling problem  has competitive ratio 4 for
minimizing the total weighted completion times $\sum_jw_jC_j$ on unrelated machines, and there is a tight lower bound of $4$ for the performance of the greedy algorithm.
\end{theorem}
\begin{proof}{Proof.}
The upper bounds follows as a special case of Theorem~\ref{thm:no_release_dates} as $\Delta=0$. As to the lower bound, we use a parametric instance from \cite{CorreaQueyranne2012}, which we briefly reproduce here for convenience. The instances are denoted $I^k$,  where $k\in \N$. There are $m$ machines, with $m$ defined large enough so that $m/h^2\in\N$ for all $h=1,\dots, k$. There are jobs $j=(h,\ell)$ for all $h=1,\dots,k$ and all $\ell=1,\dots,m/h^2$. The processing times of a job $j=(h,\ell)$
on a machine $i$ is defined as 
\[
p_{ij}=\begin{cases}1 & \text{ if } i\le \ell\,, \\ \infty & \text{ otherwise}\,.\end{cases}
\]
In other words, job $j=(h,\ell)$ can only be processed on machines $1,\dots,\ell$.  All jobs have weight $w_j=1$. As jobs have unit length on the machines on which they can be processed, we assume that the greedy algorithm breaks ties on each machine so that jobs with larger second index $\ell$ go first. 

The optimal schedule is to assign all jobs $j=(h,\ell)$ to machine $\ell$, resulting in $m/h^2$ jobs finishing at time $h$, for $h=1,\dots,k$, and hence a total cost $m\sum_{h=1}^k 1/h$. Now assume that the online sequence of jobs is by decreasing order of their second index. Then, as this is the same priority order as on each of the machines, the greedy algorithm assigns each job at the end of all previously assigned jobs. That means that the greedy algorithm assigns each job $j$ to one of the machines that minimizes its own completion time $C_j$. Here we assume that ties are broken in favour of lower machine index. It is shown in \cite{CorreaQueyranne2012} that the resulting schedule, which is in fact a Nash equilibrium in the game where jobs select a machine to minimize their own completion time, has a total cost at least $4m\sum_{i=1}^k1/i - \bigO{m}$. The lower bound of $4$ follows by letting $k\to\infty$.
\hfill\halmos
\end{proof}

\newcommand{\qxcomment}[1]{\textcolor{blue}{\textbf{[QX: #1]}}}

\section{The Online Time Model.}

This section addresses the online-time model where jobs arrive over time; that is, a job $j$ arrives at release time $r_j\ge 0$. In particular, the presence of job $j$ is unknown before time $r_j$. \red{\sout{Upon} At} time $r_j$, the job becomes available and the processing times distributions $\p_{ij}$ become known, for all machines $i=1,\dots,m$.
We may assume w.l.o.g.\ jobs are indexed such that $r_j\leq r_k$ for $j<k$. 

The difficulty in analyzing the problem where jobs arrive over time 
lies in jobs that block a machine for a long time, while shortly after, other jobs might be released that cannot be scheduled. This is a well known problem for the total weighted completion time objective in general, even for a single machine \cite{ms2004}. In order to counter that effect, a job $j$ is only started after an additional, forced delay that depends on its own expected processing time. For example for identical machine problems, \cite{ms2004} and \cite{muv2006} work with modified release times of the form $r_j':=\max\{r_j, c \E[\p_j]\}$, for \red{some parameter $c>0$}. Another idea to counter the same effect has been used in \cite{schulz2008}, namely to start a job no earlier than its (expected) starting time in a preemptive relaxation on a single machine that works $m$ times faster. For the unrelated machine problem that we consider here, we use  a combination of these two ideas. The assignment of jobs to machines will follow the same idea as for the case without release dates, namely to assign a job to a machine where (an approximation of) the expected increase of the objective value is minimal. Once assigned to a machine, for the stochastic case the modified release times will be defined on the basis of a ``nominal'' schedule where processing times $\p_{ij}$ are fixed at their expected values $\E[\p_{ij}]$. For that reason, this section first considers the deterministic problem where the processing times are defined by $p_{ij}:=\E[\p_{ij}]$ for all jobs $j$ and machines $i$.

\subsection{Nominal Schedule: Online Time Model with Deterministic Processing Times.}\label{sec:Alg_P}
Let us first describe the greedy algorithm that is used to assign jobs to machines and schedule jobs on machines. Per machine, it is actually the same greedy WSPT rule that prefers to schedule jobs with highest ratios weight over processing time $w_j/p_{ij}$, with the only difference that we also take into account modified release times. The assignment of jobs to machines is done greedily, too.
\subsubsection*{Greedy Algorithm (Online Time Model for Deterministic Processing Times).} 
Consider any fixed job $j$ that is is released at time $t=r_j$ with processing times $p_{ij}$ on machines $i=1,\dots,m$. Then we proceed as follows.
\begin{enumerate}
\item Define modified release times: On machine $i$ the release time of job $j$ is modified to $r_{ij} := \max\{ r_j, c \cdot p_{ij}\}$; we will optimize parameter $c$ later.
\item Let $U_i(t)$ denote the jobs which have been assigned to machine $i$ at time $t$ and that have not been started yet (excluding the fixed job $j$). 
\item 
\red{
To decide on the assignment of job $j$ to a machine, we define $\costub{j\to i}$ as an upper bound on the additional cost of job $j$, when included into a hypothetical greedy WSPT schedule of jobs $U_i(r_j)$ on machine~$i$. 
The reason to work with an upper bound instead of the exact value, is potential jobs that could be released in the interval $(r_j,r_{ij})$. These could delay the earliest possible start time of job $j$ beyond $r_{ij}$. In defining $\costub{j\to i}$, we account for the maximum additional delay that such jobs could impose on $j$; see Lemma~\ref{lem:incr_r_j} below.
}
\item Among all machines $i\in\{1,\dots,m\}$, assign job $j$ to a machine $m(j)$ that minimizes $\costub{j\to i}$, ties broken arbitrarily.
\item\label{it:wspt} On each machine $i$, \red{\sout{we}} schedule jobs following the greedy weighted shortest processing time rule (WSPT) with modified release times $r_{ij}$. That is, as soon as a machine falls idle at time $t$, we schedule among all unscheduled jobs $k$ assigned to machine $i$ with $r_{ik}\le t$, any job $j$ with maximal ratio $w_k/p_{ik}$.
\end{enumerate}

\subsubsection*{Analysis.}
We now show that this greedy online algorithm is \red{7.216}-competitive. This is interesting in its own right because it improves on the best prior algorithm that was known to be $8$-competitive \cite{HSSW97}. 
As before, let us denote by $\algo$ the total value achieved by the greedy algorithm, and $\opt$ to be the optimal solution value.

\begin{definition}
\red{For job $j$ and machine $i$, define}
\[
\red{
\costub{j\to i}:= w_j  \left( \left( 1+\frac{1}{c} \right) r_{ij} +  p_{ij} + \sum_{k \in U_i(r_j) , \frac{w_k}{p_{ik}} \geq \frac{w_j}{p_{ij}}} p_{ik} \right) + \sum_{ {k \in U_i(r_j) , \frac{w_k}{p_{ik}} < \frac{w_j}{p_{ij}}} }  w_k p_{ij}\,.
}
\]
\end{definition}

\begin{lemma}\label{lem:incr_r_j}
\red{If $m(j)$ is the machine to which job $j$ got assigned by the greedy algorithm, then }
\red{
\begin{align*}
\algo  & \leq \sum_{j\in J}\costub{j\to m(j)}\,.
 \end{align*}
 }
 \end{lemma}
\begin{proof}{Proof.}
\red{Let $X_i(t)$ be the remaining processing time of a job that is in process on machine $i$ at time $t$, with $X_i(t)=0$ if no such job exists.
Consider a fixed job $j$'s contribution to $\sum_j w_j C_j$. When job $j$ is released at time $r_j$, it is assigned to a machine that minimizes $\costub{j\to i}$. We estimate the latest starting time of job $j$ on machine $i$, given the jobs $U_i(r_j)$ that have been assigned to the same machine: 
First, job $j$ can be started no earlier than time $r_{ij}$, and at time $r_{ij}$, the machine might be blocked for another $X_i(r_{ij})$ time units by some job $h$. Note that such job $h$ could even get released later than $r_j$, in time interval $(r_j,r_{ij})$.
Independent of this, $j$'s start can be further delayed by ``high priority jobs'' $k$ from $U_i(r_j)$, meaning that $w_k/p_{ik}\geq w_j/p_{ij}$. Finally, job $j$ could in turn delay the ``low priority jobs'' from $U_i(r_j)$.}
\red{
Hence, the increase of $\sum_j w_j C_j$, caused by job $j$ being assigned to machine $i$, is at most}
\[
\red{
w_j  \biggl( r_{ij}+ X_i(r_{ij}) + \sum_{k \in U_i(r_j) , \frac{w_k}{p_{ik}} \geq \frac{w_j}{p_{ij}}} p_{ik} + p_{ij} \biggr) + \sum_{ {k \in U_i(r_j) , \frac{w_k}{p_{ik}} < \frac{w_j}{p_{ij}}} }  w_k p_{ij}
\ \le\ \costub{j\to i}\,.
}
\]
\red{To see why the inequality is true, let $h$ be the potential job in process at time $r_{ij}$, then}
\[
\red{X_i(r_{ij}) \le p_{ih} \le \frac{r_{ih}}{c} \le \frac{r_{ij}}{c}\,.}
\]
\red{The claim now follows by summing over all jobs $j\in J$, and because of the following observation: In time interval $(r_j,r_{ij})$, even more ``high priority jobs'' $k$ could get released, and such jobs $k$ cause $j$'s start being delayed even further. But the delay that these jobs will impose on $j$, will be accounted for in the term $\costub{k\to i}$. The set of all ``low priority jobs'' that could get released in interval $(r_j,r_{ij})$, can cause $j$'s start being delayed by at most $X_i(r_{ij})$.
}
\hfill\halmos
\end{proof}

\begin{theorem}\label{thm:6}
The greedy algorithm for the deterministic online scheduling problem with release times has competitive ratio $\red{7.216}$ for
minimizing the total weighted completion times $\sum_jw_jC_j$ on unrelated machines. That is, $\algo\le \red{7.216}\,\opt$.
\end{theorem}
\begin{proof}{Proof.}
Let $m(\red{j})$ be the machine to which job $\red{j}$ got assigned. Define 
\begin{align*}
\alpha_j &:= \costub{j\to m(j)} \\
\beta_{i,s} &:= \sum_{k : m(k) = i; \ r_k \leq s; \ C_k \geq s} w_k
\end{align*}
\red{By definition of $\alpha$, $\beta$, and by Lemma~\ref{lem:incr_r_j} we then have}
\[ \algo =  \red{\sum_{i,s} \beta_{i,s} \le \sum_j \alpha_j}\,. \]
For this analysis, we again consider a speed scaled problem instance, but now we need to modify both the release times and the processing times by a factor $f$ as follows.
\begin{align*}
r_j^f &:= \frac{r_{j}}{f}\ \text{ and }\ 
p_{ij}^f := \frac{p_{\red{i}j}}{f}\,, \\
\intertext{so that we have}
r_{ij}^f & = \frac{r_{ij}}{f}\,.
\end{align*}
\red{Consider the same greedy algorithm on the scaled instance.} Observe that the machine assignment in the speed scaled instance is the same as in the original instance. In fact, the speed scaled instance just scales time by a factor of $f$. Define, $\alpha_j^f$ analogously as the \red{upper bound}  on the increase in total weighted completion time due to the presence of job $j$ in the speed scaled instance, and $\beta^f_{i,s}$ as the weight of the unfinished jobs on machine $i$ at time $s$ in the speed scaled instance. Then
\begin{align}
\begin{split}
\label{eq:def_alpha_beta_f}
\alpha_j^f &= \frac{\alpha_j}{f}, \\
\beta^f_{is} &= \beta_{i (f\cdot s)}.
\end{split}
\end{align}
(Here we assume w.l.o.g.\ that all job sizes \red{and release times} are integer multiples of $f$, which can be achieved by scaling.)
Also, let us denote by $\algo^f$ the value achieved by the greedy algorithm for the speed scaled instance, and note that $\algo^f= \sum_{i,s} \beta_{i,s}^f = \algo/f \red{\le \sum_j \alpha_j^f} $.

In the next section we are going to prove Lemma~\ref{lemma:dual_feasible} which gives a lower bound on the optimal solution value $\opt$, again via some feasible solution for the dual of a linear programming relaxation of the form $\left( \frac{\boldsymbol{\alpha}^f}{a} , \frac{\boldsymbol{\beta}^f}{b} \right)$ for  constants $(a,b)$, which will yield that
\begin{align*}
\opt \geq \sum_j \frac{\alpha^f_j}{a} - \sum_{i,s} \frac{\beta^f_{is}}{b} 
\ \red{\ge}\  \frac{\algo}{f} \left( \frac{1}{a} - \frac{1}{b} \right),
\end{align*}
or
\[ 
\algo \ \red{\leq}\  \frac{f\cdot\opt}{{1}/{a} - {1}/{b}}\,. 
\]
Now setting parameters \red{$c=2/3$, $a=32/23, b=16/3$, and speed $f = 23/6$} are feasible choices for using Lemma~\ref{lemma:dual_feasible}, which gives $\algo \leq \red{(7+11/51)\cdot \opt < 7.216} \cdot \opt$.\hfill\halmos
\end{proof}

\subsection{Linear Programming Relaxation and Dual Lower Bound.}

Analogous to the earlier linear programming relaxation \eqref{sto}, we can define the same LP relaxation for the  instance with release times $r_j$. We omit repeating this LP relaxation here as it is exactly the same as \eqref{sto}, except that the variables $y_{ijs}$ are defined only for times $s\geq r_j$. 
Let us refer to this modified LP relaxation for the problem with release dates $(\textup{S}_r)$, and its optimal solution value $z^{S_r}$. 
Similarly, analogous to (\ref{det}) we can define an LP relaxation for the deterministic version of the same problem with deterministic processing times $\E[\p_{ij}]$, by dropping all terms $-\CV[\p_{ij}]^2$ from the relaxation $(\textup{S}_r)$, and eliminating constraints 
\eqref{eq:LPy:additional}. Let us refer to this deterministic LP relaxation $(\textup{P}_r)$ with optimal solution value $z^{P_r}$. Lemma \ref{lem:BoundlingLP-Det} and Corollary \ref{cor:BoundingPrimal} apply to this linear programming relaxation in exactly the same way as before.
That is, when $\opt$ denotes the expected value of an optimal stochastic scheduling policy for the unrelated machine scheduling problem with release dates, we have that
\begin{equation}\label{eq:lp_bound_r_j}
z^{P_r}\leq \bigl(1+\frac{\Delta}{2}\bigr)z^{S_r}\,\leq \bigl(1+\frac{\Delta}{2}\bigr)\opt\,.
\end{equation}
Specifically, for the purpose of the proof of Theorem~\ref{thm:6}, observe that for the case of deterministic processing times where $\Delta=0$, the optimal LP solution value $z^{P_r}$ is simply a lower bound for $\opt$.

\medskip
\subsubsection*{Dual Lower Bound.} By duality, we can lower bound the optimal solution value $z^{P_r}$ for LP relaxation $(\textup{P}_r)$  by  any feasible solution to its dual linear program, which is:
\begin{align}
\tag{$\textup{D}_r$}
\label{dual_online}
\begin{split}
\max\quad &z^{D_r}  \ =\ \sum_{j\in J}\alpha_j\ -\ \sum_{i\in M}\sum_{s\in\BZ_{\geq0}} \beta_{is}\\
\text{s.t.} \quad& \frac{\alpha_j}{\E[\p_{ij}]}\ \le\  \beta_{is}\ +\ w_j\left(\frac{s+\frac12}{\E[\p_{ij}]} +\frac12\right) \text{ for all } i\in M,j\in J,s\in\BZ_{\geq r_j}\,,\\
&\beta_{is}\ \geq\ 0 \hspace{26.5ex}\text{ for all } i\in M, s\in\BZ_{\geq0}\,.
\end{split}
\end{align}

\begin{lemma} \label{lemma:dual_feasible}
With $\boldsymbol{\alpha}^f$ and $\boldsymbol{\beta}^f$ as defined in \eqref{eq:def_alpha_beta_f}, the values $(\frac{\boldsymbol{\alpha}^f}{a} , \frac{\boldsymbol{\beta}^f}{b} )$ are a feasible solution for the dual \eqref{dual_online}, given that \red{$af \geq 2(2+c)$, $1/c \le f(a-1)$, and $af\ge b$}.
Specifically for \red{$c=2/3$, $a=32/23, b=16/3$, and speed $f=23/6$}, the objective function value of the dual solution yields  $z^{D_r}(\frac{\boldsymbol{\alpha}^f}{a} , \frac{\boldsymbol{\beta}^f}{b})\ \red{\ge \ \frac{\algo}{f}(\frac{1}{a}-\frac{1}{b})\ =\ \frac{\algo}{7+11/51}}$.
\end{lemma}
\begin{proof}{Proof.}
We are only left to show the feasibility of the solution $( \frac{\boldsymbol{\alpha}^f}{a} , \frac{\boldsymbol{\beta}^f}{b} )$. For convenience, let us write $p_{ij}$ for $\E[\p_{ij}]$. Then the dual constraints require that, for all jobs $j$ and machines $i$, and for all times $s \geq r_j$
\begin{align}
\frac{\alpha_j}{p_{ij}} &\leq \beta_{is} + w_j \frac{s+\frac{1}{2}}{p_{ij}} + w_j \cdot \frac{1}{2}\,.
\end{align}
Let us fix job $j$ and machine $i$. Plugging in the values $\alpha^f_{j}/a$ and $\beta^f_{is}/b$, we need to show
\begin{align}
\frac{\alpha^f_j}{a\cdot p_{ij}} &\leq \frac{\beta^f_{is}}{b} + w_j \frac{s+\frac{1}{2}}{p_{ij}} + w_j \cdot \frac{1}{2}
\end{align}
for all $s\geq r_j$. Equivalently, noting that $\boldsymbol{\alpha}^f = \boldsymbol{\alpha}/f$, we have to show that
\begin{align}
\frac{\alpha_j}{p_{ij}} &\leq af \cdot \frac{\beta^f_{is}}{b} + w_j \frac{s+\frac{1}{2}}{p_{ij}}\cdot af + w_j \cdot \frac{af}{2} \,. 
\end{align}
Since $\beta^f_{is} = \beta_{i , fs}$ (the version with machines' speeds scaled by $f$ is just scaling down time by factor of $f$), and replacing $s+\frac12$ by $s$, it therefore suffices to show
\begin{align}
\label{eq:rhs}
\frac{\alpha_j}{p_{ij}} &\leq af \cdot \frac{\beta_{i,fs}}{b} + w_j \frac{s}{p_{ij}}\cdot af + w_j \cdot \frac{af}{2}  
\end{align}
for all $s\geq r_j$. Due to Lemma~\ref{lem:incr_r_j}, and our choice of $\alpha_j$ as minimizer of $\incr{j\to i}$ we have \red{for all machines $i$}  
\begin{align}
\label{eq:lhs}
\red{  \frac{\alpha_j}{p_{ij}}  \le  \frac{w_j}{p_{ij}} \cdot
  \left( \left( 1+\frac{1}{c} \right) r_{j} + p_{ij} + \sum\limits_{k \in U_i(r_j) , \frac{w_k}{p_{ik}} \geq \frac{w_j}{p_{ij}}} p_{ik} \right) + \sum\limits_{ {k \in U_i(r_j) , \frac{w_k}{p_{ik}} < \frac{w_j}{p_{ij}}}  } w_k \,.}
\end{align}
Hence it suffices to show that the right hand side in \eqref{eq:lhs}
is upper bounded by the right hand side in \eqref{eq:rhs}.
To that end, we even show a slightly stronger inequality is true: Recall that $\beta_{i,fs}$ is the total weight of jobs $k$ assigned to machine $i$ and unfinished at time $fs$
but with $r_k\le fs$. \red{As long as $f\ge 1$, and since $r_j\le s$}, we have
$r_j\le fs$.
Hence, $\beta_{i,fs} \ge \sum_{ k : m(k) = i, r_k \leq r_j, C_k \geq fs } w_k \ge \sum_{ k\in U_i(r_j), C_k \geq fs } w_k$.
Therefore it suffices to show that the right hand side of \eqref{eq:lhs}
is bounded from above by 
\begin{align*}
 &\frac{af}{b} \cdot  \sum_{ k\in U_i(r_j), C_k \geq fs  } w_k + w_j\frac{s}{p_{ij}} \cdot af + w_j \cdot \frac{af}{2} \\
 = &\left( \frac{af}{b} \cdot  \sum_{ k  \in U_i(r_j), C_k \geq fs } w_k + \frac{w_j}{p_{ij}} \cdot (fs-r_j)  \right) + \frac{w_j}{p_{ij}} \cdot (fs(a-1) + r_j ) + w_j \cdot \frac{af}{2}
\end{align*}
\red{Multiplying everything with $p_{ij}$}, we therefore need to argue that the following inequality is true
\begin{align*}
 & w_j  \left( \red{\left( 1+\frac{1}{c} \right) r_{ij} + p_{ij}} + \sum_{k \in U_i(r_j) , \frac{w_k}{p_{ik}} \geq \frac{w_j}{p_{ij}}} p_{ik} \right) + \sum_{ {k \in U_i(r_j) , \frac{w_k}{p_{ik}} < \frac{w_j}{p_{ij}}} }  w_k p_{ij} \\
& \qquad \qquad  \leq \left( \frac{af}{b} \cdot  \sum_{ k \in U_i(r_j)  :  C_k \geq fs } w_k p_{ij} + w_j \cdot (fs-r_j)  \right) + w_j \cdot (fs(a-1) + r_j ) + w_j{p_{ij}} \cdot \frac{af}{2}\,.
\end{align*}
Let us rewrite this more conveniently as
\red{
\begin{align}
\nonumber
 & w_j\cdot\underbrace{\biggl(\bigl( 1+\frac{1}{c}\bigr) r_{ij} + p_{ij}\biggr)}_{I}  
 \ +\  
 \underbrace{\sum_{k \in U_i(r_j) , \frac{w_k}{p_{ik}} \geq \frac{w_j}{p_{ij}}} w_j p_{ik}
 \ + \ 
 \sum_{ {k \in U_i(r_j) , \frac{w_k}{p_{ik}} < \frac{w_j}{p_{ij}}} }  w_k p_{ij}}_{II} \\
 \label{eq:final_step}
& \qquad \qquad  
\ \leq\ \  
\underbrace{ \frac{af}{b} \cdot  \sum_{ k  \in U_i(r_j) : C_k \geq fs } w_k p_{ij}
\ +\ 
w_j \cdot (fs-r_j)}_{II^*}  
\  + \ 
w_j \cdot \underbrace{\bigl((fs(a-1) + r_j ) + p_{ij}\cdot\frac{af}{2}\bigr)}_{I^*}\,.
\end{align}
}
The following observations and conditions are sufficient for the above inequality to be true.
\red{
\begin{enumerate}
\item  $I\le I^*$: Distinguish two cases. When $r_{ij}=r_j$, 
we have $I=(1+\frac{1}{c}) r_{ij} + p_{ij}= r_j+\frac{r_j}{c} + p_{ij}$. Moreover, since $s\ge r_j$,  $I^*=(fs(a-1) + r_j ) + p_{ij}\cdot\frac{af}{2}\ge r_j + f(a-1)r_j + p_{ij}\cdot\frac{af}{2}$. Therefore, we get that $I\le I^*$ under the conditions that  $1/c \le f(a-1)$, and $af\ge 2$.
On the other hand, when $r_{ij}=cp_{ij}$, we get $I=(2+c)p_{ij}$, and 
we get that $I\le I^*$ under the condition that $2(2+c)\le af$, whenever $a\ge 1$. Summarizing, we get that $I\le I^*$ for both cases, conditioned on  $1/c \le f(a-1)$ and $2(2+c)\le af$.
\item $II \le II^*$ : We have by definition of $U_i(r_j)$ that 
\[  w_j (fs-r_j) \ge w_j \sum_{ k \in U_i(r_j), C_k < fs } p_{ik} \,. \]
Therefore, under the condition that $\frac{af}{b} \geq 1$, we get that 
$II\le II^*$, because then
\[
II^*-II\ 
\ge\ \sum_{k \in U_i(r_j) , \frac{w_k}{p_{ik}} \geq \frac{w_j}{p_{ij}}, C_k\ge fs} (w_k p_{ij}- w_jp_{ik})
\ +\  
\sum_{k \in U_i(r_j) , \frac{w_k}{p_{ik}} < \frac{w_j}{p_{ij}}, C_k < fs} (w_j p_{ik}- w_kp_{ik})\ \ge 0\,.
\] 
\end{enumerate}
}
\hfill\halmos
\end{proof}
\subsection{Online Time Model with Stochastic Processing Times.}\label{sec:alg_S}

Let us first describe how we modify the greedy algorithm from the preceding section for the case with stochastic processing times.
\subsection*{Greedy Algorithm (Online Time Model with Stochastic Processing Times).}
When job sizes are stochastic, we use exactly the same greedy assignment of jobs to machines as we used in the preceding section for the deterministic case using processing times $p_{ij}:=\E[\p_{ij}]$. 

The only difference lies is the scheduling of jobs per machine, which works by restricting jobs to start no earlier than in the ``nominal'' schedule with deterministic processing times $p_{ij}=\E[\p_{ij}]$. Specifically, the jobs assigned to any machine $i$ are scheduled exactly in the same order as in the nominal schedule, with the $\ell$th job to start on machine $i$ starting at time
\[ S_{i,\ell} = \max\{  s_{i,\ell} , S_{i,\ell-1}+P_{i,\ell-1} \}\,. \]
Here, $s_{i,\ell}$ denotes the deterministic starting time of the $\ell$th job in the nominal schedule where $p_{ij}=\E[\p_{ij}]$ for all jobs $j$ and machines $i$. Here, note that the identity of the $\ell$th job to be scheduled on machine $i$ is the same in both cases. 
Also note that for the greedy algorithm for the stochastic case, the assignment of jobs to machines is deterministic, and not dependent on the realized processing times of jobs. The following two remarks are probably helpful.
\begin{enumerate}
\item[{\bf Remark 1.}] One may wonder if and how the algorithm can actually be executed online? This simply works by concurrently building the greedy WSPT schedule with deterministic processing times $p_{ij}:=\E[\p_{ij}]$. Consider any job $j$ that was released at time $r_j$. For the assignment of job $j$ to its correct machine $i=m(j$), indeed only information is needed that is available at time $r_j$. Also observe that it may be the case that neither the value $s_{i,j}$ is necessarily known at time $r_j$, nor which of the jobs are the predecessors of job $j$ on machine $i$. But this is not necessary, as job $j$ is simply blocked for processing as long as the same job has not started being processed in the corresponding deterministic schedule. 
\item[{\bf Remark 2.}] Observe that we may introduce forced idleness before the processing of any job~$j$. That is, even if the machine $i=m(j)$ is idle, we might not process any of the available jobs, 
and this delay depends on \red{the} nominal schedule for the underlying deterministic instance with $p_{ij}=\E[\p_{ij}]$. 
One may wonder why this forced idleness is actually necessary? Apart from the analysis that is to come, the reason to do that can most easily be seen by considering the following example: There are $n^2$ ``bad'' jobs of weight $\epsilon \ll 1$ released at time 0 with i.i.d.\ processing requirements $\p_{bad}=0$ with probability $1 - 1/n^2$ and  $\p_{bad}= n$ with probability $1/n^2$, and one  ``good'' job released at time 1 with weight 1 and deterministic processing time of 1. With the proposed algorithm that never starts a job before its starting time in the nominal schedule, we can schedule at most $n$ bad jobs  before the good job is released, because $\E[P_{bad}]=1/n$. That yields $\E[C_{good}]= \bigO{1}$. However without the this forced idle time, a greedy algorithm would keep scheduling bad jobs until there are none (if all are of size 0), or a rare long bad job is encountered. That yields $\E[C_{good}] =\Omega(n)$, which is problematic.
\end{enumerate}

\bigskip

The analysis of the greedy algorithm for the stochastic setting is based on a comparison with the nominal schedule, as expressed in the following lemma.
\begin{lemma}\label{lem:stoch_det}
The expected starting time of a job $j$ on machine $i$ in the stochastic case is bounded in terms of its starting time in the underlying nominal schedule\footnote{We write $S_{i,j}$ to indicate that job $j$ was assigned to machine $i$, only for notational convenience. As the assignment of jobs to machines is deterministic, observe that $S_j=S_{i,j}$.} by $\E[S_{i,j}] \leq h(\Delta)s_{i,j} $, where
\begin{align*}
h(\Delta) &= \begin{cases}
1+\frac{\sqrt{\Delta}}{2}, & \Delta \leq 1, \\
1+\frac{\Delta }{\Delta + 1}, & \Delta \geq 1\,.
\end{cases}
\end{align*}
\end{lemma}
Observe that $h(\,\cdot\,)$ is a concave, increasing function of $\Delta$, that $h(\Delta)\le 2$ for all $\Delta\ge 0$, and $h(0)=1$. Specifically, Lemma~\ref{lem:stoch_det} implies the weaker bound $\E[S_{i,j}] \leq 2s_{i,j}$.
\begin{proof}{Proof.}
For simplicity of notation, let us say that the jobs $k=1,\dots j$ are the jobs that have been assigned to machine $i$, in this order.
By definition of the algorithm for the stochastic setting, and by the fact that both the assignment to machines and the sequencing per machine is identical to the nominal schedule, the following equality holds per realization of the processing times. 
\begin{align*}
S_{i,j} & = \max \{ s_{i,j} , S_{i,j-1} + P_{i , j-1} \}  \\
&= \max \{ s_{i,j} , s_{i,j-1} + P_{i,j-1} , s_{i,j-2} + P_{i,j-2}+P_{i,j-1}, \ldots , P_{i,1} + \cdots + P_{i,j-1} \} \\
& =: F_{i,j}(P_{i,1}, P_{i,2}, \cdots, P_{i,j-1})
\end{align*}
Noting that the function $F_{i,j}$ is non-decreasing, Lipschitz continuous with coefficient 1 in each of its coordinates, and $F_{i,j}( \E[P_{i,1}], \cdots, \E[P_{i,j-1}] ) = s_{i,j} $, we have:
\begin{align*}
F_{i,j}(P_{i,1}, \cdots, P_{i,j-1}) & = 
F_{i,j}(\E[P_{i,1}], \cdots , \E[P_{i,j-1}]) 
+ 
\left( F_{i,j}(P_{i,1}, \cdots , P_{i,j-1}) 
- F_{i,j}(\E[P_{i,1}], \cdots , \E[P_{i,j-1}])  \right) \\
&  \leq s_{i,j} + \sum_{k=1}^{j-1} (P_{i,k} - \E[P_{i,k}])^+\,.
\intertext{Lemma~\ref{lemma:Extremal}, which is proved in the appendix, yields $\E[(P_{i,k} - \E[P_{i,k}])^+]\le (h(\Delta)-1)\E[\p_{i,k}]$. Hence taking expectations, we get }
\E[S_{i,j}] & \leq s_{i,j} + \sum_{k=1}^{j-1} \E[P_{i,k}] (h(\Delta)-1) \\
& \leq h(\Delta)s_{i,j}\,.
\end{align*}
The last inequality holds since for the nominal schedule, the $j$th job can not begin before time $\sum_{k=1}^{j-1} \E[P_{ik}]$, which means that $s_{i,j}\ge \sum_{k=1}^{j-1} \E[P_{ik}]$.\hfill\halmos
\end{proof}


\bigskip

We conclude with the main theorem of this section.
\begin{theorem}
The greedy algorithm has a performance guarantee of \red{$(7.216+3.608\Delta)h(\Delta)$} for online scheduling of stochastic jobs with release times on unrelated machines to minimize the expectation of the total weighted completion times $\E[\sum_jw_jC_j]$.  That is, $\algo \le \red{(7.216+3.608\Delta)}h(\Delta) \opt$.
\end{theorem}

\begin{proof}{Proof.}
Let us denote by $C_j^P$ the completion time of job $j$ in the nominal schedule as computed by the greedy algorithm as described in Section~\ref{sec:Alg_P}, where $p_{ij}=\E[\p_{ij}]$. Let us denote by $\algo_P=\sum_{j}w_jC_j^P$ the objective value achieved by that nominal schedule.
Also, let us denote by $\algo=\sum_{j}w_j\E[C_j^S]$ the expected performance of the greedy algorithm for the stochastic case as described in this section.

It follows from Lemma~\ref{lem:stoch_det} that the expected completion time of any job $j$ under the greedy algorithm for the stochastic case fulfils
$\E[C_j]\leq h(\Delta)C_j^P$, and therefore
\[
\algo \le h(\Delta) \algo_P\,.
\]
What we have shown in Lemma~\ref{lemma:dual_feasible} is that there exists a solution to the dual LP relaxation $(\textup{D}_r)$ with value 
$\red{\ge \algo_P/(7+11/51)> \algo_P/7.216}$. Therefore by LP duality we get that $\algo_P \le \red{7.216}z^{P_r}$, with $z^{P_r}$ being the optimal solution value for the LP relaxation $(\textup{P}_r)$. That yields
\[
\algo \le h(\Delta)\algo_P \le h(\Delta) \red{7.216} z^{P_r}\le h(\Delta)\red{7.216}(1+\frac{\Delta}{2})z^{S_r}\le (\red{7.216+3.608}\Delta)h(\Delta)\opt\,.
\]
Here, the third inequality follows by \eqref{eq:lp_bound_r_j}.
\hfill\halmos
\end{proof}

\section{Conclusions}
The performance guarantees for the greedy algorithm obtained in this paper
are in the order $\bigO{\Delta}$, which is the same order of magnitude as earlier results that have been obtained for offline problems on unrelated machines \cite{SSU2016}, and of the same order of magnitude as earlier bounds for the online identical machines setting \cite{muv2006}. Getting results independent of $\Delta$ is an interesting open problem.

We also believe that the presented competitive analyses for the online deterministic problems are interesting in their own right, even if better competitive ratios can be obtained. We think so because the proposed greedy algorithm is (arguably) simple and intuitive, and hence practical. Finding a (matching) lower bound for the case with release times would be interesting.

Another direction for future work is the derivation of genuine lower bounds on the approximability of the optimal expected performance of efficiently computable policies for stochastic scheduling problems, even in the offline setting. This would allow to separate the computational complexity of stochastic problems from the corresponding deterministic special cases.

\subsubsection*{Acknowledgements.}
This work was started while all four authors were with the Simons Institute for the Theory of Computing at UC Berkeley.  The authors wish to thank the institute for the financial support and the organizers of the semester on ``Algorithms \& Uncertainty'' for providing a very stimulating atmosphere. A conference publication with preliminary results appeared in the proceedings of IPCO 2017 \cite{IPCO_paper}. B.\ Moseley was employed at Washington University in St.\ Louis while some of this research was conducted.  B.\ Moseley was supported in part by a Google Research Award, a Yahoo Research Award and NSF Grant CCF-1830711, 1824303, and 1733873. All authors wish to express their gratitude to the referees for their helpful comments and for challenging us to improve the results. Special thanks to Sven J\"ager from TU Berlin, for pointing out a flaw in one of the proofs in a previous version of this paper.

\bibliography{stochasticR}

\begin{APPENDICES}

\section{Auxiliary Lemmas}
\label{sec:aux}

\begin{lemma}\label{lemma:LPy:compl-time}
We focus on a single machine and job. Let $P$ denote the random variable for the processing time with support $\BZ_{>0}$. Let $x_{t}$ denote the probability that the job starts processing on the machine at time $t$ $(t = 0,1,\ldots)$. For a given set of $\{x_{t}\}$ variables, let $y_{s}$ denote the probability that the job is being processed on the machine during time slot $s$. Then, the expected completion time of the job is given by
\[ C = \sum_{s\in\BZ_{\geq0}} \left(\frac{y_{s}}{\E[\p]}\,\bigl(s+\tfrac12\bigr)+\frac{1-\CV[\p]^2}{2}\,y_{s}\right). \]
\end{lemma}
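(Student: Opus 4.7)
The plan is direct: compute $C$ from first principles using the $x_t$ variables, then evaluate the right-hand side of the claimed identity by substituting the linear expression $y_s = \sum_{t=0}^s x_t \Prb{P>s-t}$ from \eqref{eq:fromxtoy}, swap the order of summation, and apply Lemma~\ref{lem:moment}. The resulting expression should match $C$ after a short algebraic check involving the identity $\E[P^2]=\E[P]^2+\Var[P]$.

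More precisely, conditioned on the job starting at time $t$, its completion time has expectation $t+\E[P]$, so
\[
C \;=\; \sum_{t\in\BZ_{\geq0}} x_{t}\,(t+\E[P])\;=\;\sum_t t\,x_t\ +\ \E[P]\,,
\]
where I used $\sum_t x_t=1$ (which itself follows from $y_s=\sum_{t\leq s}x_t\Prb{P>s-t}$ and the first part of Lemma~\ref{lem:moment}). For the right-hand side of the claim, I first expand
\[
\sum_s y_s\bigl(s+\tfrac12\bigr)\;=\;\sum_t x_t \sum_{u\in\BZ_{\geq0}} (t+u+\tfrac12)\Prb{P>u}\;=\;\sum_t x_t\bigl(t\,\E[P]+\tfrac12\E[P^2]\bigr),
\]
using the two parts of Lemma~\ref{lem:moment} for the two pieces of $t+u+\tfrac12$. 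Similarly $\sum_s y_s=\E[P]\sum_t x_t=\E[P]$. Dividing the first display by $\E[P]$ and adding $\tfrac{1-\CV[P]^2}{2}\sum_s y_s$ gives
\[
\sum_t t\,x_t\ +\ \frac{\E[P^2]}{2\E[P]}\ +\ \frac{1-\CV[P]^2}{2}\E[P]\,.
\]

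The remaining step is to verify that the last two terms collapse to $\E[P]$, which is the algebraic identity
\[
\frac{\E[P^2]}{2\E[P]}+\frac{1-\CV[P]^2}{2}\E[P]\;=\;\E[P],
\]
equivalent to $\E[P^2]=\E[P]^2(1+\CV[P]^2)=\E[P]^2+\Var[P]$, which is the definition of $\CV[P]^2$. The one minor subtlety — not really an obstacle, but worth flagging — is being careful that the double sum $\sum_s(s+\tfrac12)\sum_{t\leq s}x_t\Prb{P>s-t}$ converges so that Fubini applies; this is immediate since $P$ has finite second moment and $\sum_t x_t=1$, and the change of variables $u=s-t$ then reduces each inner sum to a moment of $P$ via Lemma~\ref{lem:moment}.
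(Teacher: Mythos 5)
Your proof is correct and follows essentially the same route as the paper's: substitute \eqref{eq:fromxtoy} into the right-hand side, swap the order of summation, evaluate the inner sums with the two identities of Lemma~\ref{lem:moment}, and finish with the definition $\E[\p^2]=(1+\CV[\p]^2)\E[\p]^2$. The one difference is that you invoke the normalization $\sum_t x_t=1$, which lets you collapse $\sum_s y_s$ to $\E[\p]$ and do the algebra with constants; the paper instead keeps every step linear in the $\{x_t\}$ and never uses this normalization, proving the identity $\sum_t x_t(t+\E[\p])=\sum_s\bigl(\tfrac{y_s}{\E[\p]}(s+\tfrac12)+\tfrac{1-\CV[\p]^2}{2}y_s\bigr)$ as is. That linear form is what is actually summed over machines to justify \eqref{eq:LPy:compl-time}, where the per-machine start probabilities $x_{ijt}$ sum only to the probability that $j$ is assigned to machine $i$, not to one; your argument covers the lemma as stated but would need the linear (unnormalized) version, or a conditioning step, to yield the multi-machine formula. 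Also, your parenthetical justification of $\sum_t x_t=1$ via \eqref{eq:fromxtoy} and Lemma~\ref{lem:moment} does not actually follow from those facts (they only give $\sum_s y_s=\E[\p]\sum_t x_t$); the correct, and trivial, reason is that the start time of a job that is surely processed on this machine is an almost-surely finite random variable, so its probabilities sum to one. Neither point is a genuine gap for the statement as posed.
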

\begin{proof}{Proof.}
It follows from the fact that policies are non-anticipatory that in terms of $\{x_t\}$ variables, the expected completion time is
\[ C =  \sum_{t=0}^{\infty} x_t (t + \E[P])\,. \]
Further, from \eqref{eq:fromxtoy},
\[ y_{s} = \sum_{t=0}^{s} x_t \cdot \Prb{P > s-t } , \]
which also gives
\[ \sum_{s=0}^\infty  y_s = \E[P] \sum_{t=0}^\infty x_t\,. \]
Consider the summation
\begin{align*}
\sum_{s=0}^{\infty} y_s \left( s + \frac{1}{2} \right) &= \sum_{s=0}^{\infty} \left( s+ \frac{1}{2} \right) \sum_{t=0}^s x_t \cdot \Prb{P > s-t} \\
&= \sum_{t=0}^{\infty} x_t \sum_{s=t}^\infty \left( s + \frac{1}{2} \right) \Prb{P> s-t}   \\
&= \sum_{t=0}^{\infty} x_t \left(  t \sum_{r=0}^\infty \Prb{P>r}  + \sum_{r=0}^{\infty} \left( r + \frac{1}{2} \right)  \Prb{P>r} \right) \\
&= \sum_{t=0}^{\infty} x_t \left(  t \cdot  \E[P]  +  \frac{1}{2} \E[P^2] \right) \\  
&= \E[P] \sum_{t=0}^{\infty} x_t \cdot t   +  \frac{1}{2} \E[P^2] \sum_{t=0}^{\infty} x_t \\  
&= \E[P] \left( \sum_{t=0}^{\infty} x_t \cdot t   +  \frac{1+\CV[P]^2}{2}  \sum_{s=0}^{\infty} y_s \right) 
\end{align*}
or,
\[ \sum_{t=0}^{\infty} x_t \cdot t  = \sum_{s=0}^{\infty} \left( \frac{y_s}{\E[P]} \left( s+ \frac{1}{2}\right) - \frac{1 + \CV[P]^2}{2} y_s  \right) . \]
Adding $ \sum_{t=0}^{\infty} x_t \E[P] = \sum_{s=0}^{\infty} y_s $ to the above, gives
\[ C = \sum_{t=0}^{\infty} x_t (t+\E[P]) = \sum_{s=0}^{\infty} \left( \frac{y_s}{\E[P]} \left( s+ \frac{1}{2}\right) + \frac{1-\CV[P]^2}{2} y_s  \right) . \]
\hfill\halmos
\end{proof}

\begin{lemma} \label{lemma:Extremal}
Let $X$ be a non-negative random variable with mean $\mu$ and squared coefficient of variation $\Delta$. Then,
\begin{align}
\E[ (X-\mu)^+ ] \leq \begin{cases}
\mu \frac{\sqrt{\Delta}}{2} & \Delta \leq 1, \\
\mu \frac{\Delta}{\Delta + 1} & \Delta \geq 1.
\end{cases}
\label{eqn:extremalbound}
\end{align}
\end{lemma}
\begin{proof}{Proof.}
We consider the problem of finding the measure on $\R_+$ for $X$ that maximizes $\E[(X-\mu)^+]$. Assuming $X$ has a density, this can be written as an infinite dimensional linear program. Using duality, we provide these bounds by exhibiting feasible solutions to its dual. It is instructive to follow this path, and later we argue that the bound holds for any distribution of $X$.
We note that the bounds are in fact tight which can be easily proved by analyzing the distributions in the duality-based proof, but we omit this step here. 

We begin with the primal problem of finding the extremal measure on $\R_+$ as an infinite dimensional LP, assuming the extremal measure has density $f(\cdot)$).
\begin{align*}
\max_{ f(\cdot) }  \quad &  \int_{0}^{\infty} (x-\mu)^+ f(x) dx \\
\mbox{s.t.} \quad & \int_{0}^{\infty} f(x) = 1 \\
& \int_{0}^\infty x \cdot f(x) dx = \mu \\
& \int_{0}^{\infty} x^2 \cdot f(x) dx = \mu^2 (1+\Delta).
\end{align*} 
The dual problem to the above is:
\begin{align*}
\min_{ \alpha, \beta, \gamma } \quad & \alpha + \mu \beta + \mu^2 (1+\Delta) \gamma \\
\mbox{s.t.} \quad & \alpha + \beta  x+ \gamma x^2 \geq (x-\mu)^+ \qquad \text{for all } x \geq 0.
\end{align*}

{\bf Case $\Delta \geq 1$}. We begin by making a guess about the extremal distribution $f(\cdot)$. Namely, that it is a parametric distribution with as limiting case two atoms, one of which is at $0$. Under this assumption, one observes that $\Prb{X=0} = \frac{\Delta}{\Delta + 1}$ and $\Prb{X= \mu(\Delta+1)} = \frac{1}{\Delta + 1}$ yields the desired conditions that $\E[X]=\mu$, $\E[X^2] = \mu^2(\Delta+1)$ and $\E[(X-\mu)^+] = \mu \frac{\Delta}{\Delta + 1}$. For the given guess, complementary slackness implies:
\begin{align}
\alpha + \beta \cdot 0 + \gamma \cdot 0 &= 0 \\
\alpha + \beta \cdot \mu (\Delta + 1) + \gamma \cdot \mu^2 (\Delta+1)^2 & = \mu \Delta.
\label{eqn:compslackness2}
\end{align}
The first of the above gives $\alpha = 0$, and the second gives the dual objective value of $\mu \frac{\Delta}{\Delta+1}$. It now remains to verify dual feasibility. Dual feasibility is met if the gradient of the quadratic function $\alpha + \beta x + \gamma x^2$ at $x=0$ is non-negative, and it is tangent to $(x-\mu)^+$ at $x=\mu (\Delta+1)$. The latter implies, 
\begin{align}
\beta + 2 \gamma \mu (\Delta+1) &= 1
\end{align}
which together with \eqref{eqn:compslackness2} gives, $\beta = \frac{\Delta - 1}{\Delta + 1}$ and $\gamma = \frac{1}{\mu (\Delta+1)^2}$. The non-negativity of gradient at $x=0$ is true if and only if $\beta \geq 0$. Therefore if $\Delta \geq 1$, then $\alpha = 0, \beta = \frac{\Delta-1}{\Delta+1}, \gamma = \frac{1}{\mu (\Delta+1)^2}$ is a feasible dual solution with objective value $\mu \frac{\Delta}{\Delta + 1}$. 

To turn this idea into a formal proof, we claim that
for any $x \geq 0$ and $\Delta \geq 1$, we have
\[ 
(x-\mu)^+ \leq \frac{\Delta-1}{\Delta+1} x + \frac{1}{\mu(\Delta+1)^2}x^2 \,.  
\]
This follows from $(x - \mu(\Delta+1))^2  \ge  0$ by adding $2x=(\Delta+1)x+(\Delta-1)x$ to both sides. Then basic algebra yields that $(x-\mu) \le (\Delta-1)x / (\Delta+1) + x^2 / (\mu(\Delta+1)^2)$, where we assume w.l.o.g.\ that $\mu>0$.  Therefore, 
for any random variable $X \geq 0$ with $\E[X]>0$ we have
\[ 
\E[(X-\mu)^+] \leq \frac{\Delta - 1}{\Delta+1} \E[{X}] + \frac{1}{\mu(\Delta+1)^2} \E[{X^2}] . \]
Substituting $\E[{X}]=\mu, \E[{X^2}] = \mu^2(1+\Delta)$, we get
\[ \E[{(X-\mu)^+}] \leq \mu \frac{\Delta}{\Delta+1}\,,  \]
which proves the second case of \eqref{eqn:extremalbound}.

{\bf Case: $\Delta \leq 1$.} The constraints of the dual suggest that we should look for a quadratic function $\alpha + \beta x + \gamma x^2$ which is tangent to $(x-\mu)^+$ at two points, one of which must be in the interval $[0,\mu]$. Therefore, $\alpha + \beta x + \gamma x^2 = \gamma (x - \nu_1)^2$ for some $0\leq \nu_1 \leq \mu$. Let this quadratic be tangent to $(x-\mu)$ at $\nu_2 \geq \mu$. The tangency conditions give
\begin{align}
\gamma (\nu_2 - \nu_1)^2 &= \nu_2 - \mu \\
2 \gamma (\nu_2 - \nu_1) &= 1
\end{align}
which together imply $\gamma = \frac{1}{4(\mu - \nu_1)}$. Thus, the dual minimization problem becomes the following single parameter optimization problem over $\nu_1$:
\begin{align*}
& \min_{\nu_1} \underbrace{\frac{\nu_1^2}{4(\mu-\nu_1)}}_{=: \alpha(\nu_1)}  + \mu \underbrace{\left( -\frac{\nu_1}{2(\mu-\nu_1)}  \right)}_{=: \beta(\nu_1)} + \mu^2 (\Delta+1) \underbrace{\frac{1}{4 (\mu-\nu_1)}}_{=:\gamma(\nu_1)}  \\
= & \min_{\nu_1} \frac{\mu^2 \Delta }{4(\mu-\nu_1))} + \frac{\mu-\nu_1}{4} \\
= & \frac{\mu \sqrt{\Delta}}{2}.  
\end{align*}
The minimizer is $\nu_1^* = \mu (1-\sqrt{\Delta}) $ which is indeed in the interval $[0, \mu]$ for $\Delta \leq 1$ as desired for dual feasibility.

As before, we can turn this into a formal proof by showing that for any $x \geq 0$, and $0 \leq \Delta \leq 1$,
\[ (x-\mu)^+ \leq    \frac{1}{4(\mu-\nu)} (x- \nu)^2\,, \]
where we have defined $\nu = \mu(1-\sqrt{\Delta})$. Then for any random variable $X \geq 0$ with $\E[X]>0$, 
\[ 
\E[(X-\mu)^+]  \leq \E\left[ \frac{1}{4(\mu-\nu)}(X-\nu)^2 \right]\,.
\]
After substituting $\E[{X}]=\mu, \E[{X^2}]=\mu^2(1+\Delta)$ this gives
\[ 
\E[{(X-\mu)^+}] \leq \mu \frac{\sqrt{\Delta}}{2}\,. 
\]
This completes the proof for the first case of \eqref{eqn:extremalbound}.

%
%

{\bf Remark:} It was pointed out to us by one of the referees that the bound for the case $\Delta \leq 1$ follows even simpler by observing $\E[{(X-\mu)^+}] = \frac{1}{2} \E[{|X-\mu|_2}] \leq \frac{1}{2}\sqrt{\E[(X-\mu)^2]}=\frac{\mu\sqrt{\Delta}}{2}$. Here, the first equality follows because $X-\mu$ has mean zero, and the inequality is by Jensen's inequality. The linear programming argument however is constructive in the sense that it also gives tight examples.
\hfill\halmos

\end{proof}

\end{APPENDICES}

\end{document}